\documentclass[letterpaper,11pt]{article}
\usepackage[T1]{fontenc}
\usepackage[margin=1in]{geometry}
\usepackage{amsmath,amsfonts,amsthm}
\usepackage{thm-restate}
\usepackage{pifont}
\usepackage{soul}
\usepackage{graphicx}
\usepackage{enumerate}
\usepackage{bbm,bm}
\usepackage{verbatim}
\usepackage{diagbox}
\usepackage[hypertexnames=false,bookmarksnumbered=true,final]{hyperref}
\usepackage[capitalize,nameinlink]{cleveref}
\usepackage[dvipsnames]{xcolor}
\usepackage[linesnumbered,ruled,vlined]{algorithm2e}
\usepackage{appendix}
\usepackage{makecell}
\usepackage{enumitem}
\usepackage{tablefootnote}
\usepackage{tikz}
\usetikzlibrary{datavisualization}
\usepackage{xifthen}
\usepackage{subcaption}
\usepackage{hhline}
\usepackage{authblk}

\def\colorschemesepia{sepia}
\def\colorschemedark{dark}
\def\colorschemelight{light}

\ifx\colorscheme\undefined
\let\colorscheme\colorschemelight
\fi

\ifx\colorscheme\colorschemelight
\colorlet{textColor}{black}
\colorlet{bgColor}{white}
\fi

\ifx\colorscheme\colorschemesepia
\definecolor{textColor}{HTML}{433423}
\definecolor{bgColor}{HTML}{fbf0da}
\fi

\ifx\colorscheme\colorschemedark
\definecolor{textColor}{HTML}{bdc1c6}
\definecolor{bgColor}{HTML}{202124}
\definecolor{textRed}{HTML}{ff968c}  %
\definecolor{textGreen}{HTML}{70cc70}  %
\definecolor{textBlue}{HTML}{8cbcff}  %
\definecolor{textCyan}{HTML}{70cccc}  %
\definecolor{textMagenta}{HTML}{d982d9}  %
\definecolor{textYellow}{HTML}{bfbf69}  %
\definecolor{textPurple}{HTML}{c58af9}
\colorlet{textHeavy}{white}
\else
\colorlet{textRed}{red!50!black}
\colorlet{textGreen}{green!35!black}
\colorlet{textBlue}{blue!40!black}
\definecolor{textPurple}{HTML}{995272}
\colorlet{textHeavy}{textColor}
\fi

\ifx\colorscheme\colorschemelight\else
\pagecolor{bgColor}
\color{textColor}
\fi

\colorlet{dimColor}{textColor!50!bgColor}

\hypersetup{
    colorlinks,
    citecolor=textGreen,
    linkcolor=textBlue,
    urlcolor=textPurple,
    pdfpagemode=UseNone,
    pdfstartview=FitW
}

\renewcommand{\v}{\tau}

\let\eps\varepsilon
\newcommand*{\defeq}{:=}

\newcommand*{\wLoG}{without loss of generality}

\newcommand*{\boolOne}{\mathbbold{1}}  %

\newcommand*{\vecZero}{\mathbf{0}}

\newcommand*{\xmark}{\ding{55}}

\newcommand*{\abs}[1]{\lvert #1 \rvert}

\DeclareMathOperator*{\E}{\mathbb{E}}
\DeclareMathOperator*{\Var}{Var}

\DeclareMathOperator{\Dem}{\mathit{OPT}}
\DeclareMathOperator{\Pre}{Pre}

\newcommand*{\APX}{{\sf APX}}
\newcommand*{\NP}{{\sf NP}}

\newcommand*{\p}{\bm{p}}
\newcommand*{\x}{\bm{x}}
\newcommand*{\y}{\bm{y}}
\newcommand*{\z}{\bm{z}}
\newcommand*{\pvec}{\bm{p}}

\newcommand*{\xvec}{\bm{x}}
\newcommand*{\yvec}{\bm{y}}

\DeclareMathAlphabet{\mathbbold}{U}{bbold}{m}{n}

\newcommand*{\D}{\mathcal{D}}
\newcommand*{\I}{\mathcal{I}}
\renewcommand{\P}{\mathcal{P}}
\newcommand*{\Dcal}{\mathcal{D}}
\newcommand*{\Ical}{\mathcal{I}}
\newcommand*{\Ncal}{\mathcal{N}}

\newcommand*{\phat}{\widehat{p}}

\newcommand*{\pvechat}{\widehat{\pvec}}

\theoremstyle{plain}
\newtheorem{theorem}{Theorem}
\newtheorem*{theorem*}{Theorem}

\newtheorem{lemma}{Lemma}

\theoremstyle{definition}
\newtheorem{definition}{Definition}
\newtheorem{example}{Example}

\crefname{appsec}{Appendix}{Appendices}
\crefname{lemma}{Lemma}{Lemmas}
\crefname{theorem}{Theorem}{Theorems}
\crefname{definition}{Definition}{Definitions}
\crefname{fact}{Fact}{Facts}
\crefname{claim}{Claim}{Claims}
\crefname{proposition}{Proposition}{Propositions}
\tikzset{
  my grid style/.style={
    tikz/data visualization/all axes={
      grid,grid={
        minor={style={draw={textColor!10!bgColor}}},
        major={style={draw={textColor!10!bgColor}}},
      }
    }
  }
}

\allowdisplaybreaks
\makeatletter
\g@addto@macro{\UrlBreaks}{%
\do\/%
\do\a\do\b\do\c\do\d\do\e\do\f\do\g\do\h\do\i\do\j\do\k\do\l\do\m%
\do\n\do\o\do\p\do\q\do\r\do\s\do\t\do\u\do\v\do\w\do\x\do\y\do\z%
\do\A\do\B\do\C\do\D\do\E\do\F\do\G\do\H\do\I\do\J\do\K\do\L\do\M%
\do\N\do\O\do\P\do\Q\do\R\do\S\do\T\do\U\do\V\do\W\do\X\do\Y\do\Z%
\do\0\do\1\do\2\do\3\do\4\do\5\do\6\do\7\do\8\do\9%
}
\makeatother

\makeatletter
\@ifpackageloaded{enumitem}{%
\newenvironment*{tightemize}{\begin{itemize}[noitemsep]}{\end{itemize}}%
\newenvironment*{tightenum}{\begin{enumerate}[noitemsep]}{\end{enumerate}}%
}{%
\newenvironment*{tightenum}{\begin{enumerate}}{\end{enumerate}}%
}
\makeatother

\let\citep\cite
\let\citet\cite

\title{Revenue-Optimal Pricing for Budget-Constrained Buyers\\ in Data Markets%
\thanks{J.~Garg and E.~Sharma were supported by NSF grant CCF-2334461.
B.~R.~Chaudhury and J.~Song were supported by the NSF Career Award CCF-2441580.}}

\author[1]{Bhaskar Ray Chaudhury}
\author[1]{Jugal Garg}
\author[1]{Eklavya Sharma}
\author[1]{Jiaxin Song}

\affil[1]{University of Illinois, Urbana-Champaign}
\affil[ ]{\{braycha, jugal, eklavya2, jiaxins8\}@illinois.edu}
\date{\empty}

\begin{document}

\maketitle

\begin{abstract}
We study revenue-optimal pricing in data markets with rational, budget-constrained buyers.
Such a market offers multiple datasets for sale, and buyers aim to
improve the accuracy of their prediction tasks by acquiring data bundles.
The market's objective is to price datasets to maximize total revenue, considering that
buyers with quasi-linear utilities choose their bundles optimally under budget constraints.
We allow the buyers to purchase fractions of datasets,
and the amount they pay is proportional to the fraction they receive.
Although \emph{competitive equilibrium} gives revenue-optimal pricing in rivalrous markets
with quasi-linear buyers, we show that revenue maximization in data markets is APX-hard.
Despite the hardness, we design a 2-approximation algorithm when datasets arrive online,
and a $(1-1/e)^{-1}$-approximation algorithm for the offline setting.

\end{abstract}

\section{Introduction}
\label{sec:intro}

Data is the fuel that drives AI-ML technologies. The rapid integration of AI-ML technologies into nearly every industry, coupled with declining storage costs, has elevated \emph{data} to be one of the most valuable assets of the 21st century.
According to~\citet{AcumenDM}, the U.S. big data market alone is projected to reach approximately $\$ 473$ billion by 2030---a testament to data's growing strategic and economic importance.
As data becomes an increasingly important economic asset, developing a rigorous theory of data pricing is essential to establish the foundations of emerging data-driven economic paradigms.

In this paper, we study revenue-optimal pricing strategies for a centralized data marketplace that sells datasets to a group of budget-constrained buyers. The marketplace offers $m$ datasets, denoted by $\Dcal_1, \Dcal_2, \dots, \Dcal_m$, and serves $n$ buyers with respective budgets $b_1, b_2, \dots, b_n$. Each dataset $\Dcal_j$ consists of $s_j$ data records. Let $p_j \in \mathbb{R}_{\geq 0}$ be the price of each dataset $\Dcal_j$. Purchasing $z_j$ data records from $\Dcal_j$ requires a payment of $p_j\cdot(z_j/s_j)$.
This pricing model reflects common practices in some real data marketplaces.
For example, commercial platforms such as Snowflake Marketplace allow data providers to charge buyers on a usage-based basis, including per-query and per-row pricing for access to paid datasets~\cite{suger_snowflake_pricing}.
Similarly, third-party data marketplaces employ volume-based pricing schemes, with providers such as Bright Data charging per thousand records accessed~\cite{brightdata2026datamarketplaces}.

Once the marketplace sets a price $p_j \geq 0$ for each dataset $\Dcal_j$, each buyer demands a utility-maximizing combination of data records, possibly drawn from multiple datasets, subject to their budget constraint. The marketplace's goal is to determine optimal prices to maximize the total earned revenue. In order to state our model and results clearly, it is important to first understand the buyer's utility model over the available datasets and their underlying records.

\paragraph{Utility for data.}
In this paper, following Shannon's information-theoretic view, we model data as an asset that reduces uncertainty~\cite{veldkamp2023valuing,cover2005entropy}. We consider environments where buyers are machine learning (ML) agents seeking to improve the accuracy of their predictions, for which additional data enhances precision.

Each buyer $i \in N$ aims to estimate an unknown parameter $\theta_i$, representing a latent quantity of interest, e.g., future traffic flow, energy demand, or delivery time. Each data record serves as a digitized signal, providing noisy information about $\theta_i$. Signals from each dataset follow a fixed prior distribution determined by the corresponding data-generation process, and its relevance for estimating $\theta_i$, and signal distributions differ across datasets due to heterogeneity in data generation and relevance. For instance, a buyer training a model to forecast traffic congestion ($\theta_i$) may purchase vehicle location data from a navigation platform and aggregated mobility statistics from a ride-sharing service. Both provide noisy signals about the same underlying state but differ in informativeness.

A \textit{data bundle} $\xvec_i = (x_{i,1}, x_{i,2}, \dots, x_{i,m})$ represents the collection of data records acquired by buyer $i$ from multiple datasets, where $x_{i,j}$ denotes the quantity purchased from dataset~$j$. Upon observing the full set of signals $S(\xvec_i)$, the buyer updates their belief about the latent parameter~$\theta_i$ to the posterior distribution $\theta_i \mid S(\xvec_i)$. Following the literature on \emph{value of data}~\citep{baley2025data}, we define the buyer’s \emph{accuracy gain} from bundle~$\xvec_i$ as the reduction in uncertainty about~$\theta_i$:
\[ a_i(\xvec_i) = \E\!\left[\Pre(\theta_i \mid S(\xvec_i))\right] - \Pre(\theta_i), \]
where the precision of a random variable, $\Pre(\theta)$,
is defined as the inverse of its variance, i.e., $\Pre(\theta) = 1 / \Var(\theta)$.
Consistent with the standard literature on revenue maximization and auction theory~\citep{myerson1981optimal,klemperer2004auctions}, we assume that the buyer’s net utility from a data bundle~$\xvec_i$ equals their value for the accuracy gain minus the total payment. Let $\pvec \defeq (p_1, \ldots, p_m)$ be the vector of prices. Then, the buyer’s net utility from bundle~$\xvec_i$ is given by
\[ u_i(\xvec_i, \pvec) = \alpha_i \, a_i(\xvec_i) - \sum_j p_j \cdot (x_{i,j}/s_j), \]
where $\alpha_i$ represents the buyer’s valuation parameter for accuracy improvement.
We also note that more general signaling structures---such as those arising from complementary or correlated datasets---can give rise to richer utility functions that capture interactions between datasets. In fact, these utilities have been well studied in both economics and computation. %
For the purposes of this paper, we focus on the simpler utility form described above. This form captures the essential trade-off between the informativeness of data and its cost, while keeping the analysis tractable, providing a natural starting point for our results.

\paragraph{Optimal demand bundles and the revenue maximization problem.}
Once the prices are fixed, each buyer selects an optimal bundle of data records that maximizes her utility subject to her budget constraint. Formally, given the price vector $\pvec$, buyer~$i$'s optimum demand bundle $\Dem_i(\pvec)$ is defined as,
\begin{align*}
    \Dem_i(\pvec)
    ~=~
    \arg\max_{\yvec} \bigg\{ u_i(\yvec, \pvec)
    ~\big|~
    y_j \leq s_j \,\forall j, \,\sum_{j=1}^m p_j \cdot (y_j/s_j) \le b_i \bigg\},
\end{align*}
where $\yvec = (y_1, \dots, y_m)$ denotes the quantities of data records purchased from the available datasets, and the feasible set includes all \emph{affordable bundles}, i.e., bundles whose total cost does not exceed the buyer's budget.
The data marketplace, anticipating that buyers demand their optimum bundles, aims to determine the prices that maximizes its total revenue.
We highlight an important subtlety. Observe that each buyer’s optimal bundle is independent of the demands of other buyers. In other words, the amount of data records from a particular dataset that one buyer demands does not depend on how much of that dataset is demanded by others. \emph{However, at any price vector $\pvec$, it is feasible to allocate every buyer their individually optimal demand bundle whenever it is well-defined.}
This is a direct consequence of the \emph{non-rivalrous} nature of data: consumption by one buyer does not diminish its availability to others.
Crucially, this property fails in traditional settings with \emph{rivalrous} goods, where it may be impossible to allocate every buyer their optimal bundle simultaneously at certain prices. Hence, for rivalrous goods, the set of feasible prices consists precisely of those at which it is possible to allocate all buyers their optimal bundles simultaneously.
Although this may seem to make the rivalrous case more intricate, we show in \cref{sec:overview} that, on the contrary, the opposite holds.

The resulting bi-level optimization problem can be expressed as

\begin{equation}
\label{pgm:rev-max}
\max_{\substack{\pvec \in \mathbb{R}_{\ge 0}^m \\ \xvec_i \in \Dem_i(\pvec) \,\forall i}}
\quad
\sum_{i=1}^n \sum_{j=1}^m p_j \cdot (x_{i,j}/s_j)
\end{equation}

Given the optimal prices $\pvec$ and the corresponding optimal demand bundles of buyers $\xvec = (\xvec_1, \xvec_2, \dots, \xvec_n)$, the pair $(\pvec, \xvec)$ constitutes a \emph{Stackelberg Equilibrium} (SE): the data market, acting as the leader, commits to a pricing strategy, and buyers, acting as followers, best respond with their utility-maximizing bundles.

\section{Overview of Our Results}
\label{sec:overview}

In this section, we outline our main contributions, where we provide high-level overviews and proof sketches;
full technical details are deferred to \cref{sec:approx-algo,sec:inapprox,sec:market-clear}.

For all our results, following the model in~\cite{ChaudhuryGMS26}, we assume that each latent parameter $\theta_i$ is drawn from a prior Gaussian distribution $\Ncal(0, \tau^{-1}_i)$. Each data record from dataset $\Dcal_j$ provides a noisy signal to buyer $i$ of the form $s_{i,j} = \theta_i + \eta_{i,j}, \quad \eta_{i,j} \sim \Ncal(0, \tau_{i,j}^{-1})$, where $\eta_{i,j}$ represents the observational noise. The noise variance $\tau_{i,j}^{-1}$ differs across datasets, reflecting the varying relevance and informativeness of each dataset for buyer $i$’s prediction task. In Appendix~\ref{app:form_of_accuracy}, we show that under the foregoing assumptions, we have $a_i(\xvec_i) = \sum_j \tau_{i,j} x_{i,j}$, i.e., the precision improves linearly with the number of data records from each dataset. We define the value of a data bundle $\bm x_i$ to buyer $i$ as $\alpha_i \cdot \sum_j \tau_{i,j} x_{i,j}$.

Assuming independent noise across datasets is realistic when each dataset provides a distinct proxy for the same latent quantity of interest. For example, if $\theta_i$ represents the true next-week demand for a product in a region, then transaction logs, search query volumes, and pre-order counts each provide imperfect measurements of $\theta_i$. The noise in each dataset is independent of the true demand and of the other datasets because it arises from separate operational mechanisms.

While one could consider correlated noises, or other generalizations leading to nonlinear data valuations---positive correlations inducing diminishing returns, negative correlations generating complementarities, or valuations being non-linear in precision---\emph{our focus is to show that even under linear valuations, which form a fundamental and extensively studied benchmark in economic theory~\cite{Gale60}, data markets differ substantially from classical markets in the complexity of the problem (\cref{sec:overview:differences}).}

For notational convenience, from here onwards, instead of representing the quantity of data in terms of the number of records,
we specify quantity in terms of \emph{fractions} of datasets, i.e, a bundle $\xvec_i = (x_{i,1}, \ldots, x_{i,m})$ contains an $x_{i,j} \in [0, 1]$ fraction of each dataset $j$.
If the price vector is $\pvec$, then the price of a bundle $\xvec_i$ would be $\pvec^T\xvec_i$.
Datasets typically have a very large number of records, so we allow each $x_{i,j}$ to be an arbitrary number in $[0, 1]$ (i.e., we \emph{do not} require $x_{i,j}$ to be a multiple of $1/s_j$).

For convenience, we abuse notation and denote the value of the entire dataset $\Dcal_j$ to buyer $i$ by $\tau_{i,j}$ instead of $\alpha_i\cdot\tau_{i,j}$,
and denote the value of a bundle $\xvec_i$ to buyer $i$ by $\tau_i(\xvec_i) = \sum_{j=1}^m \tau_{i,j}x_{i,j}$
instead of $\alpha_i\cdot\sum_{j=1}^m \tau_{i,j}x_{i,j}$.

\subsection{Difference from Rivalrous Markets}
\label{sec:overview:differences}

While the problem of maximizing revenue has been extensively studied in traditional rivalrous (goods) economies, the techniques and solution concepts developed there do not extend to our setting. We begin by highlighting the key technical barriers that prevent such extensions, and then introduce our proposed solutions.

\paragraph{Pricing rivalrous goods.}
The problem of pricing \emph{rivalrous goods} has been well-studied in the existing literature. Recall that the prices need to be set in such a way that it is feasible to give every buyer their optimum demand bundle (which is defined independently of the demand of other buyers). A closely related challenge also arises in the \emph{perfectly competitive} setting involving rivalrous goods. Here too, prices must be defined so that each buyer’s individually optimal bundle---determined independently of others’ demands---can be feasibly allocated. However, unlike in revenue maximization, the objective is not to choose prices that maximize the total revenue, but to identify prices that ensure \emph{market clearing}, i.e., that total aggregate demand equals total available supply for every good. Formally, the pair $(\p, \x)$ constitute a \emph{Competitive Equilibrium (CE)} if (1) every buyer receives their optimal bundle (i.e., $\x_i \in \overline{OPT}_i(\p)$)\footnote{In $\overline{OPT}_i(\p)$, buyers are supply-\emph{unaware}, i.e., $\overline{OPT}_i(\p) = \arg\max_{\yvec} \{ u_i(\yvec)\ |\ \sum_{j=1}^m p_j(y_j) \le b_i\}$}; (2) the market clears: the aggregate demand of every good equals its available supply, i.e., $\sum_{i} x_{i,j} = s_j$ for all $j$.
A CE is known to exist when buyers have quasi-concave utilities~\citep{arrow1954existence} and can be efficiently computed when buyers have linear utilities~\citep{eisenberg1959consensus, DevanurPSV08,Orlin10}.
Quite remarkably, the very same competitive pricing mechanism turns out to be \emph{revenue-optimal} when buyers have quasi-linear utilities~\citep{finster2023substitutes}---a rare and elegant coincidence where market efficiency and revenue optimality align perfectly. This striking harmony implies that computing the revenue-optimal prices in the rivalrous setting is, in fact, polynomial-time solvable since the convex formulation for CE with quasi-linear utilities~\citep{DGJMVY17} can be solved in polynomial-time.

\paragraph{Adapting CE does not work for non-rivalrous markets.}
A natural first step is to adapt the notion of a competitive equilibrium (CE) to the non-rivalrous setting, in the hope of obtaining similar guarantees---namely, that a CE implies revenue optimality, and can be computed in polynomial time. However, the standard definition of CE from rivalrous markets does not extend directly. In particular, the aggregate demand for a good can no longer be expressed as $\sum_i x_{i,j}$, since a non-rival good can be simultaneously consumed by multiple buyers.
In the non-rivalrous market by \cite{ChaudhuryGMS26}, the aggregate demand is defined as $\max_i x_{i,j}$ rather than $\sum_i x_{i,j}$, so the market clears if, for every dataset, at least one buyer includes all available records of that dataset in her optimal bundle.
Unfortunately, we find that this adaptation leads to undesirable outcomes. In particular, the total revenue generated by a CE can be substantially lower than that of a Stackelberg equilibrium (SE). Indeed, we construct an instance (\cref{ex:inapprox_n_ce_se}) that admits a unique CE whose revenue is far from optimal—demonstrating that, in the non-rivalrous setting, a CE may yield revenue that is not even a reasonable approximation of the maximum attainable revenue.

\begin{example}
\label{ex:inapprox_n_ce_se}
Consider a data market with $n$ buyers $a_1,\dots, a_n$ and one dataset $j_1$.
The first buyer has a budget of $2$ while each of the remaining $n-1$ buyers has a budget of $1.9$.
Each data buyer $a_i$ has a value of $\v_{i, 1}$ equal to the budget $b_i$.
At a CE, the price of $\D_{j_1}$ should be set at least $2$.
Otherwise, buyer $a_1$ will demand more than $1$, which exceeds the supply of dataset $j_1$.
However, none of the remaining $n-1$ buyers is willing to buy $\D_{j_1}$ anymore in that case, and the total revenue is at most $2$.
In contrast, the revenue-optimal price is $p_{j_1} = 1.9$ in an SE, and the revenue is $1.9\cdot (n-1)$.
Therefore, the gap between the two equilibria is $1.9\cdot(n-1)/ 2 \in \Omega(n)$.
\begin{figure}[htbp]
\centering
\begin{tikzpicture}[
    agent/.style={circle, draw, fill=blue!50!black!50, text=bgColor, minimum size=0.2cm, font=\small},
    job/.style={rectangle, draw, fill=green!50!black!50, text=bgColor, minimum size=0.2cm, font=\small},
    edge/.style={thick},
]

\node[agent] (a1) at (0, 2) {$a_1$};
\node[agent] (a2) at (0, 1) {$a_2$};
\node (a3) at (0, 0) {$\dots$};
\node[agent] (a4) at (0, -1) {$a_{n}$};

\node[job] (j1) at (2, 1.5) {$j_1$};
\draw[edge] (a1) -- (j1) node[midway, above] {\scriptsize 2};
\draw[edge] (a2) -- (j1) node[midway, above] {\scriptsize 1.9};
\draw[edge] (a4) -- (j1) node[midway, above] {\scriptsize 1.9};

\node at (2.75, 1.5) {\small \textcolor{red!75!textHeavy}{$\$2$}};
\node at (9.75, 1.5) {\small \textcolor{red!75!textHeavy}{$\$1.9$}};
\node at (-1, 2) {\small 2};
\node at (-1, 1) {\small $1.9$};
\node at (-1, 0) {\small $1.9$};
\node at (-1, -1) {\small $1.9$};

\node[agent] (a1r) at (7, 2) {$a_1$};
\node[agent] (a2r) at (7, 1) {$a_2$};
\node (a3r) at (7, 0) {$\dots$};
\node[agent] (a4r) at (7, -1) {$a_{n}$};
\node[job] (j1r) at (9, 1.5) {$j_1$};

\draw[edge] (a1r) -- (j1r) node[midway, above] {\scriptsize 2};
\draw[edge] (a2r) -- (j1r) node[midway, above] {\scriptsize 1.9};
\draw[edge] (a4r) -- (j1r) node[midway, above] {\scriptsize 1.9};

\node at (6, 2) {\small 2};
\node at (6, 1) {\small $1.9$};
\node at (6, 0) {\small $1.9$};
\node at (6, -1) {\small $1.9$};

\node[align=center] at (-2.5, 0.5) {\bf CE};
\node[align=center] at (4.5, 0.5) {\bf SE};
\end{tikzpicture}

\caption{Prices for CE and SE of \cref{ex:inapprox_n_ce_se}, where the prices are in \textcolor{red!75!textHeavy}{red}}
\label{fig:placeholder}
\end{figure}
\end{example}

\subsection{Buyer Behavior and Structured Solutions}
\label{sec:overview:rev-struct}

While the non-rivalry makes revenue maximization computationally harder than the rivalrous setting, it still offers a very clean closed-form characterization of the optimal revenue as function of the prices (unlike the rivalrous setting). To derive it, we must first understand how buyers behave.

After the price vector $\p = (p_1, \dots, p_m)$ is fixed,
each buyer faces a fractional knapsack problem, where the knapsack's capacity is $b_i$,
and each dataset $\Dcal_j$ has \emph{profit} $\v_{i,j} - p_j$ and price $p_j$.
Define dataset $j$'s bang-per-buck (for buyer $i$) as $\v_{i,j}/p_j$.
The optimal solution to the fractional knapsack problem is obtained via a greedy algorithm:
sort the datasets in non-increasing order of bang-per-buck,
and keep purchasing in that order till either the budget is exhausted,
or all datasets of bang-per-buck at least 1 have been purchased.
Thus, the revenue earned from each buyer $i$ is
\begin{equation}
\label{eq:ri}
r_i(\p) = \min\left(b_i, \sum_{j \in [m]:\,\v_{i, j}\ge p_j} p_j\right)
    = \min\left(b_i, \sum_{j=1}^m p_j\cdot\boolOne(\v_{i,j} \ge p_j)\right).
\end{equation}
(Here $\boolOne(X)$ is 1 if proposition $X$ is true and 0 otherwise.)
Denote the total revenue by $r(\p)$, and define it to be the sum of revenues from each buyer, i.e.,
\[ r(\p) \defeq \sum_{i=1}^n r_i(\p). \]
Note that non-rivalry ensures that the decision of one buyer does not constrain what other buyers can purchase, allowing this additive, per-buyer formula to capture the total revenue exactly. In contrast, rivalrous goods introduce interdependencies between buyers’ demands, which generally precludes such a simple closed-form expression.

Next, we show that we can restrict our attention to solutions
where each dataset's price equals some buyer's value for it.

\begin{lemma}
\label{thm:rdisc}
In a data market instance, for any price vector $\pvec \in \mathbb{R}_{\ge 0}^m$,
there exists another price vector $\pvechat \in \mathbb{R}_{\ge 0}^m$ such that
\begin{tightenum}
\item $r_i(\pvechat) \ge r_i(\pvec)$ for every buyer $i \in [n]$.
\item $\phat_j \in P_j \defeq \{\v_{i,j}: i \in [n]\}$ for every dataset $j \in [m]$.
\end{tightenum}
\end{lemma}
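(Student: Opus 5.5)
The plan is to round each price \emph{up}, coordinate by coordinate, to the nearest admissible value, and then check that no buyer's revenue decreases. Given $\pvec$, define for each dataset $j$ the set $A_j \defeq \{\v_{i,j} : i \in [n],\ \v_{i,j} \ge p_j\}$. If $A_j \neq \emptyset$, set $\phat_j \defeq \min A_j$, the smallest buyer value for $j$ that is at least $p_j$. If $A_j = \emptyset$, no buyer values $j$ at $p_j$ or more, so $j$ contributes nothing to any $r_i(\pvec)$ in \eqref{eq:ri}. In that case set $\phat_j \defeq \max_i \v_{i,j}$, or any element of $P_j$. Condition 2 then holds by construction.

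For condition 1, compare the sums in \eqref{eq:ri} term by term for a fixed buyer $i$. The key observation is that raising $p_j$ to $\phat_j$ crosses no buyer's value, because there is no $\v_{i',j}$ strictly between $p_j$ and $\phat_j$. Concretely:
\begin{tightenum}
\item When $A_j \ne \emptyset$, the set $\{i : \v_{i,j} \ge p_j\}$ equals $\{i : \v_{i,j} \ge \phat_j\}$. Inclusion $\supseteq$ holds since $\phat_j \ge p_j$. Inclusion $\subseteq$ holds since any $\v_{i,j} \ge p_j$ lies in $A_j$ and so is at least $\min A_j = \phat_j$.
\item Hence every indicator $\boolOne(\v_{i,j} \ge p_j)$ is unchanged, and each surviving term grows from $p_j$ to $\phat_j \ge p_j$.
\item When $A_j = \emptyset$, the old term is $0$ and the new term is nonnegative.
\end{tightenum}
It follows that $\sum_j \phat_j \boolOne(\v_{i,j} \ge \phat_j) \ge \sum_j p_j \boolOne(\v_{i,j} \ge p_j)$. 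Since $\min(b_i,\cdot)$ is monotone, this gives $r_i(\pvechat) \ge r_i(\pvec)$.

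The only delicate point is the direction of rounding. Rounding down could make new buyers eligible, which is harmless, but it lowers the price of datasets that are already bought and so may reduce revenue. Rounding up past a buyer's value would drop that buyer. Choosing the minimum element of $A_j$ avoids both problems.

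One technicality concerns the closed-form \eqref{eq:ri} itself: it treats ties $\v_{i,j} = p_j$ as purchased, and we simply adopt that convention. Equivalently, revenue is evaluated at the revenue-maximizing best response, which is consistent with the $\max$ over $\xvec_i \in \Dem_i(\pvec)$ in \eqref{pgm:rev-max}.
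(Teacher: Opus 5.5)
Your proposal is correct and follows the paper's proof essentially verbatim. Both round each price up to $\min\{\v_{i,j} : \v_{i,j} \ge p_j\}$, or to an arbitrary element of $P_j$ when no buyer values $j$ at $p_j$ or more, then compare the terms of \eqref{eq:ri} buyer by buyer and use monotonicity of $\min(b_i,\cdot)$. Your explicit check that the set $\{i : \v_{i,j} \ge p_j\}$ is unchanged by the rounding is precisely what justifies the paper's termwise inequality $\phat_j\boolOne(\v_{i,j} \ge \phat_j) \ge p_j\boolOne(\v_{i,j} \ge p_j)$.
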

\begin{proof}
For any dataset $j \in [m]$, if $p_j > \max_{i=1}^n \v_{i,j}$, then no buyer will buy it,
so change its price to any $\phat_j \in P_j$.
Otherwise, increase its price from $p_j$ to the nearest element in $P_j$,
i.e., increase the price to $\phat_j \defeq \min(\{\v_{i,j}: i \in [n] \text{ and } \v_{i,j} \ge p_j\})$.

Then we have $\phat_j\cdot\boolOne(\v_{i,j} \ge \phat_j) \ge p_j\cdot\boolOne(\v_{i,j} \ge p_j)$
for all $i \in [n]$ and $j \in [m]$. Thus, by \cref{eq:ri}, we get
$r_i(\pvechat) \ge r_i(\pvec)$ for each buyer $i \in [n]$.
Additionally, we also have $\phat_j \in P_j$ for all $j \in [m]$.
\end{proof}

\Cref{thm:rdisc} gives us a way to model revenue-maximization as a discrete optimization problem.
In fact, it immediately gives us an $O(n^{m+1}m)$-time algorithm
for finding the revenue-maximizing price vector:
simply try all price vectors in $\prod_{j=1}^m \{\v_{i,j}: i \in [n]\}$
and output the one with the maximum total revenue.

\subsection{Computational Inapproximability}
\label{sec:overview:inapprox}

The $O(n^{m+1}m)$-time algorithm runs in polynomial time when $m$---%
the number of datasets---is constant.
However, if the number of datasets is large, the problem becomes \APX-hard.

\begin{restatable}{theorem}{thmLinearAPX}
\label{thm:apx_hardness}
Revenue maximization in data markets is \APX-hard.
\end{restatable}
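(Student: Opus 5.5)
We give an approximation-preserving (L-)reduction from a bounded-degree \APX-hard problem; our plan is to use Max-Cut on cubic graphs, or equivalently Max-2SAT/Max-E3SAT with bounded occurrences. The choice is guided by \cref{eq:ri}. Each buyer contributes $\min(b_i,\sum_j p_j\boolOne(\v_{i,j}\ge p_j))$, and by \cref{thm:rdisc} we may assume every price lies in the finite set $P_j$ of values for dataset $j$. If each dataset has only two distinct positive values, then $p_j$ is effectively a binary choice between a low price $L$ and a high price $H$. At price $L$ every interested buyer buys it; at price $H$ only the high-valuation buyers do. This binary choice will encode a truth assignment, or a side of a cut.

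\textbf{Construction (Max-Cut on a cubic graph $G=(V,E)$).} We create one dataset $j_v$ per vertex $v$ and aim for $p_{j_v}\in\{1,2\}$ to encode the side of $v$. For each vertex $v$ we add a ``private'' buyer valuing $j_v$ at $2$ with budget $2$, and a second private buyer valuing $j_v$ at $1$ with budget $1$. Pricing at $1$ or $2$ then yields the same private revenue of $2$, so the vertex is indifferent in isolation. For each edge $uv$ we add an edge buyer with value $1$ for both $j_u$ and $j_v$ and budget $1$. At prices $(p_u,p_v)$ this buyer pays $\min(1,\cdot)$, which is $1$ whenever at least one endpoint is priced at $1$ and $0$ otherwise. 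That is an OR-gadget, not a cut gadget, so we need a second edge buyer that prefers a high price on one endpoint. Our concrete plan is to reduce instead from Max-2SAT with bounded occurrences, or from Max-Cut via the standard encoding $x_u\oplus x_v$ as two 2-clauses. In each case we design a buyer whose payment is $1$ exactly when a 2-clause over literals is satisfied.

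For a clause $(\ell_u\vee\ell_v)$ the clause buyer values $j_u$ at $1$ if $\ell_u$ is the literal ``$p_u=1$'', and at $2$ with a budget cap if $\ell_u$ is the literal ``$p_u=2$''. The budget is chosen so that any purchase saturates it. We must verify that a clause buyer whose satisfying dataset is priced at $2$ pays the same amount as one whose satisfying dataset is priced at $1$. To arrange this we let clause budgets all equal $1$, which gives $\min(1,\cdot)=1$ whenever the clause is satisfied, since some purchased dataset then has price $\ge1$.

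\textbf{Analysis.} First, by \cref{thm:rdisc}, any price vector can be rounded so that each $p_{j_v}\in\{1,2\}$ without decreasing the revenue of any buyer. This holds because every value in the construction is $1$ or $2$ (prices of $0$ round up to $1$). Second, the revenue equals $2|V|$ plus the number of satisfied clauses, so $\mathrm{OPT}_{\text{rev}}=2|V|+\mathrm{OPT}_{\text{SAT}}$. Bounded occurrence gives $|V|=O(\mathrm{OPT}_{\text{SAT}})$, because a random assignment satisfies a constant fraction of clauses. Third, any price vector, once rounded, yields an assignment whose number of satisfied clauses falls short of the optimum by at most the revenue loss. These three facts give an L-reduction, and \APX-hardness follows.

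\textbf{Main obstacle.} The hard step is the clause gadget for the literal ``$p_u=2$''. A buyer valuing $j_u$ at $2$ also buys $j_u$ when $p_u=1$, so ``high price'' literals are not naturally excluded. We must therefore ensure that the clause buyer's payment is exactly $1$ when the clause is satisfied and strictly less, ideally $0$, otherwise. We would first try to handle this by restricting to an \APX-hard variant whose clauses use only monotone structure, for example Max-Cut phrased through the revenue drop of buyers with two-sided thresholds. Failing that, we would use budgets and values such as value $2$ and budget $2$ on $j_u$, combined with value $1$ on $j_v$, and check all four price combinations by hand. This check will determine the final gadget constants.
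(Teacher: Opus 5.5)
\textbf{Genuine gap: the clause gadget for a high-price literal.} Your route, an L-reduction from Max-2SAT/Max-Cut with prices in $\{1,2\}$, differs from the paper's. The paper reduces from bounded-degree Vertex Cover: one dataset per vertex, edge buyers with value and budget $B$ on both endpoints, and $t$ normal buyers valuing every dataset at $1$. Your proof stalls exactly at the step you flag. The gadget you actually commit to, a clause buyer with budget $1$, fails for a literal ``$p_u=2$''. A buyer valuing $j_u$ at $2$ also buys $j_u$ at price $1$, so it pays $\min(1,\cdot)=1$ whether or not the literal holds.

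This cannot be fixed by tuning constants. Take any buyers with positive budget who are interested only in $j_u,j_v$, and suppose they pay $0$ at the falsifying assignment of a clause containing a high literal. Then their value must be below $1$ on each high-literal dataset (priced $1$ there) and below $2$ on each low-literal dataset (priced $2$ there). But then they also pay $0$ when both prices are $2$, and that assignment satisfies the clause. So no gadget pays exactly $\boolOne[\text{clause satisfied}]$. Your identity ``revenue $=2|V|+\#$satisfied clauses'' is therefore false for any construction of this type, and the analysis paragraph, which rests on it, is unsupported.

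\textbf{Repair: allow a per-clause additive offset.} Three gadgets suffice, writing payments at $(p_u,p_v)=(1,1),(2,1),(1,2),(2,2)$:
\begin{itemize}
\item Value $2$ on $j_u$, value $1$ on $j_v$, budget $2$: the buyer pays $2,2,1,2$, i.e.\ $1+\boolOne[p_u=2\vee p_v=1]$.
\item Value $2$ on both, budget $3$: the buyer pays $2+\boolOne[p_u=2\vee p_v=2]$.
\item Your value-$1$, budget-$1$ buyer: pays $\boolOne[p_u=1\vee p_v=1]$.
\end{itemize}
After rounding prices into $\{1,2\}$, the revenue is $2|V|+c+\#\text{sat}$ for an instance-dependent constant $c\le 2|C|$. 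Since $|V|\le 2|C|$ and $\mathrm{OPT}_{\text{SAT}}\ge\frac34|C|$, this is an L-reduction from Max-E2SAT, and bounded occurrence is not needed. Placing the low-OR and high-OR buyers on an edge gives payment $3+\boolOne[p_u\ne p_v]$, so Max-Cut works directly as well.

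\textbf{Comparison with the paper.} With this repair your route is a legitimate alternative: a clean L-reduction with small integer data. The paper sidesteps the offset problem differently. It uses only monotone constraints of the form ``some endpoint priced high'', and a large scale $B$ swamps the at most $2$ that an uncovered edge buyer still pays. That yields a gap reduction from the Dinur-based vertex-cover hardness rather than an L-reduction.
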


This result underscores a striking contrast between rivalrous and non-rivalrous markets:
while the rivalrous settings allows for polynomial-time revenue-optimal solutions,
the non-rivalrous setting is hard even to approximate.

\subsection{Approximation Algorithm}
\label{sec:overview:approx-algo}

The closed-form expression for total revenue (\cref{eq:ri})
naturally leads to a greedy algorithm: initialize $\p = \vecZero$,
and iteratively raise the price of each dataset as much as possible---up to the point where
further increases would cause the revenue to decline.
If the revenue function is submodular (\emph{continuous submodular}), i.e., $r(\p +\Delta) - r(\p) \le r(\pvechat+\Delta) - r(\pvechat)$ for any $\p \succeq \pvechat$\,\footnote{An $r$-dimensional vector $a \succeq b$ if $a_i \geq b_i$ for all $i \in [r]$.} and $\Delta \ge \bm{0}$, then the greedy algorithm can achieve an approximation ratio of $2$ using a standard submodular analysis.
Unfortunately, we observe in \cref{example:non_sub} that the revenue function is not submodular: If most datasets are priced excessively high, buyers only spend their budgets toward the remaining lower-priced datasets, since $\v_{i,j} < p_j$ for all highly priced $j$. Consequently, the marginal revenue gain from increasing the price of a lower-priced dataset is greater when the other datasets are priced higher, contradicting submodularity.

\begin{example}[Non-submodularity of $r(\p)$]
\label{example:non_sub}
Consider a data market instance with two buyers and two datasets, and the parameters set as follows:
$\v_{1,1} = \v_{1,2} = 1, \v_{2, 1} = \eps, \v_{2,2}= 2$ and $b_1 = b_2 = 1$, where $0< \eps \ll 1$.
Then we have $r(\cdot)$ given by
\begin{alignat*}{2}
r(\eps, 1) &= \min(1, \eps+1) + \min(1, \eps+1) = 2 \quad & r(1, 1) &= \min(1, 2) + \min(1, 1) = 2 \\
r(\eps, 2) &= \min(1, \eps) + \min(1, 2) = \eps + 1 \quad & r(1, 2) &= \min(1, 1) + \min(1, 2) = 2\,.
\end{alignat*}
Notice that $r(1,2) - r(\eps, 2) = 1-\eps > r(1, 1) - r(\eps, 1)$ violates submodularity.
\end{example}

\paragraph{Reformulation using $k$-submodular functions.}
\Cref{example:non_sub} shows that one cannot, in general, expect diminishing marginal gains in revenue when varying the entire price vector.
However, submodularity might still hold in a restricted sense---specifically, when increasing the price of a single dataset while keeping all others fixed, as buyers will have decreasing residual budget with increasing prices.
This suggests the possibility of weaker notions of submodularity, such as $k$-submodularity.
To explore this, we first discretize the revenue function:
because of \cref{thm:rdisc}, we can assume \wLoG{} that
each dataset's price coincides with one of the buyers' value for it.
Hence, each price vector can alternatively be represented as a partition $(S_1, \ldots, S_n)$
of the $m$ datasets, where $j \in S_i$ implies that dataset $j$ is priced $\v_{i,j}$.
Assume $S_1, \dots, S_n$ are disjoint; if $p_j = \v_{i,j} = \v_{i',j}$ for two buyers $i$ and $i'$,
then assign dataset $j$ to either $S_i$ or $S_{i'}$ arbitrarily.
This way, we can replace the space of prices from $\mathbb{R}_{\ge 0}^m$
to the tuples $(S_1, \dots, S_n)$ of $n$-disjoint sets of $M$.

\emph{Surprisingly, after this transformation, we find some submodular properties of the induced new revenue function:}  Let $r(\mathcal{S})$ denote the revenue of the prices corresponding to $\mathcal{S}=(S_1, \dots, S_n)$, where we set $p_j = 0$ if $j\notin \bigcup_i S_i$.
The new revenue function is monotone since the revenue from a dataset is always zero if its price is zero.
Meanwhile, we notice that $r(\mathcal{S})$ also satisfies a weaker submodular property, $k$-submodularity, proposed in~\cite{huber2012towards}, which is equivalent to coordinate-wise diminishing returns when the function is also monotone.
Formally, using our notations,
\[ r(S_1, \dots, S_i\cup \{j\}, \dots, S_n) - r(S_1,\dots, S_n) \le r(T_1, \dots, T_i\cup \{j\}, \dots, T_n) - r(T_1,\dots, T_n) \]
when $j\notin \bigcup_{\ell}S_{\ell}$ and $T_{\ell} \subseteq S_{\ell}$ for all $\ell$.
Denote by $\p^S$ and $\p^T$ the price vectors corresponding to $(S_1, \dots, S_n)$ and $(T_1, \dots, T_n)$ respectively.
Unlike the counter-intuitive issue that happened in \cref{example:non_sub}---higher prices can lead to more remaining budgets, the condition $T_i \subseteq S_i$ actually offer a stronger guarantee of the usage of buyers' budget.
Since the price of every dataset in $\p^T$ is either same as in $\p^S$ or set to zero, each buyer in $\p^T$ has (weakly) more available budget than in $\p^S$, resulting in a weakly larger marginal revenue increase when $p_j$ is changed to $\v_{i, j}$.
Therefore, the new revenue function $r(\mathcal{S})$ is both monotone and $n$-submodular!
Hence, we can adapt the greedy algorithm for $k$-submodular maximization.
At a high level, the greedy algorithm first specifies an arbitrary order
of the datasets $\sigma = (j_1, \dots, j_m)$ and sets all the prices to zero.
Then in the $\ell$-th round, the algorithm sets the price of $p_{j_\ell}$ to one of
$\v_{1, j_\ell}, \dots, \v_{n, j_\ell}$ that maximizes the marginal increase of the total revenue.
The greedy algorithm is able to get an approximation ratio of 2 due to~\cite{ward2016maximizing}.
Since it can work with an arbitrary order of datasets, this algorithm also works when datasets arrive \emph{online}.
Using the algorithm of \citet{10.5555/2884435.2884465},
we can get a $(2-\Theta(1/n))$-approximate randomized algorithm.

\begin{theorem}
There exists a greedy 2-approximation algorithm to maximize revenue in data markets.
This algorithm also works when the datasets arrive online.
\end{theorem}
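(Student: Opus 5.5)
The plan is to reduce the statement to the known guarantee for greedy maximization of monotone $k$-submodular functions (Ward and \v{Z}ivn\'y), with $k=n$. By \cref{thm:rdisc}, we may restrict to price vectors with $p_j \in P_j$. Each such vector is encoded as a tuple $\mathcal{S}=(S_1,\dots,S_n)$ of pairwise disjoint subsets of $[m]$, where $j\in S_i$ means $p_j=\v_{i,j}$ and $p_j=0$ if $j\notin\bigcup_i S_i$. Every price vector with $p_j\in P_j$ for all $j$ arises from some partition of $[m]$ in this encoding. So $\max_{\mathcal{S}} r(\mathcal{S})$ equals the optimal revenue, and an $\alpha$-approximate $\mathcal{S}$ gives an $\alpha$-approximate price vector.

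Next I would verify the two structural properties, monotonicity and $n$-submodularity (orthant submodularity plus pairwise monotonicity), directly from \cref{eq:ri}.

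\emph{Monotonicity.} Adding $j$ to some $S_i$ raises $p_j$ from $0$ to $\v_{i,j}\ge 0$. For each buyer $i'$, the term $p_j\boolOne(\v_{i',j}\ge p_j)$ goes from $0$ to something nonnegative, and the other terms are unchanged. Since $\min(b_{i'},\cdot)$ is nondecreasing, each $r_{i'}$ is weakly larger.

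\emph{Diminishing returns.} Take $T_\ell\subseteq S_\ell$ for all $\ell$ and $j\notin\bigcup_\ell S_\ell$. Write $A_{i'}(\mathcal{S})=\sum_{j'}p_{j'}\boolOne(\v_{i',j'}\ge p_{j'})$. Then $A_{i'}(\mathcal{T})\le A_{i'}(\mathcal{S})$, because every dataset's contribution under $\mathcal{T}$ is either equal to its contribution under $\mathcal{S}$ or zero. Adding $j$ to the $i$-th set increases $A_{i'}$ by the same amount $c=\v_{i,j}\boolOne(\v_{i',j}\ge\v_{i,j})$ in both cases. The function $x\mapsto\min(b_{i'},x+c)-\min(b_{i'},x)$ is nonincreasing in $x$ for fixed $c\ge0$, since $\min(b,\cdot)$ is concave. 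Hence the marginal gain of each buyer is at least as large at $\mathcal{T}$ as at $\mathcal{S}$. Summing over buyers $i'$ gives the inequality displayed in the overview.

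Then I would cite the characterization of Ward and \v{Z}ivn\'y: a monotone function on $(k+1)^{m}$ is $k$-submodular iff it satisfies exactly this coordinate-wise diminishing-returns property. I would also cite their theorem that the greedy algorithm, processing elements in an \emph{arbitrary} order and assigning each to the coordinate of maximum marginal gain, is a $1/2$-approximation. For self-containment, I could instead reproduce the standard argument. Let $o$ be the optimum and $s^{(\ell)}$ the greedy solution after $\ell$ steps. Define the hybrid $o^{(\ell)}$ that agrees with greedy on $j_1,\dots,j_\ell$ and with $o$ elsewhere. Then $r(o^{(\ell-1)})-r(o^{(\ell)})\le r(s^{(\ell)})-r(s^{(\ell-1)})$, and telescoping yields $r(o)\le 2r(s^{(m)})$. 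Proving this hybrid step requires pairwise monotonicity to handle the case where greedy and $o$ assign $j_\ell$ to different coordinates, and it follows from the diminishing-returns property together with monotonicity.

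Finally, for the online claim, the greedy step for $j_\ell$ only requires evaluating $r$ (via \cref{eq:ri}) on the current partial assignment with $j_\ell$ priced at each of $\v_{1,j_\ell},\dots,\v_{n,j_\ell}$. This uses no information about future datasets, and the guarantee holds for any arrival order, so the algorithm is online. Each step takes $O(n^2 m)$ time, so it is also polynomial. The main obstacle I expect is making sure the encoding (zero price for unassigned datasets, ties broken arbitrarily) matches the domain and hypotheses of the cited $k$-submodular result. Specifically, I would need to confirm that pairwise monotonicity holds, which follows from monotonicity, and that unassigned elements are harmless because greedy assigns every element.
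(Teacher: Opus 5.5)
Your proposal is correct and follows essentially the paper's route: discretize prices via \cref{thm:rdisc}, encode them as disjoint $n$-tuples, establish monotonicity and coordinate-wise diminishing returns (a sub-tuple leaves every buyer at least as much residual budget), and invoke the order-independent greedy guarantee of \cite{ward2016maximizing}, which immediately gives the online claim. Your explicit concavity argument for $\min(b_{i'},\cdot)$ with a fixed increment $c$ and your appeal to the orthant-submodularity-plus-monotonicity characterization make the $n$-submodularity step somewhat cleaner than the paper's brief reduction, but the approach is the same.
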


As shown in~\cref{lem:2_approx_linear}, the greedy algorithm can only achieve
an approximation ratio of at most $2 - \Theta(1/n)$ in the worst case for our revenue function.
Moreover, \cite{10.5555/2884435.2884465} shows an asymptotically tight
inapproximability ratio of $2-1/n$ for maximizing a general monotone $n$-submodular function.
Therefore, one cannot hope to get better than $2-\Theta(1/n)$ approximation for our revenue function by only using its $k$-submodularity. Fortunately, we observe that the inapproximability construction in~\cite{10.5555/2884435.2884465} relies on gadgets such as quadratic terms involving the number of elements not included in $S_1,\dots,S_n$, which are not suited for our setting.

\paragraph{Applying continuous greedy via function extension.}
Given this, one might wonder whether the above approximation ratio can be improved in our setting.
One possible idea is randomization.
We can create a variable $y_{j, i}$ to represent the probability that $j\in S_i$ or equivalently $p_j = \v_{i, j}$.
Let $D_{\y}$ be the distribution induced by the probabilities $(y_{j, i})_{j\in [m], i\in [n]}$.
Therefore, the objective function turns to the expectation of the total revenue, and the original optimization problem can be reformulated as follows:
\begin{equation}
\begin{aligned}
&\max_{\y} \Big(\mathop{\mathbb{E}}\limits_{\p\sim D_{\y}}\left[r(\p)\right]\Big) = \max_{\y} \Big(\mathop{\mathbb{E}}\limits_{\p\sim D_{\y}}\Big[\sum_{i=1}^n\min(b_i, \sum_{j} p_j\cdot \boolOne[p_j \le \v_{i, j}]) \Big]\Big)\\
&\text{subject to}\quad \y_j \in \Delta_n  \text{ for any } j \in [m]
\end{aligned}
\label{prob:op_exp}
\end{equation}
Note that the expectation operator is taken outside the minimum operator.
Since the minimum induces convexity in the revenue function, one cannot freely exchange their orders, i.e.,
\[
\mathop{\mathbb{E}}\limits_{\p\sim D_{\y}}\Big[\sum_{i=1}^n\min(b_i, \sum_{j} p_j\cdot \boolOne[p_j \le \v_{i, j}]) \Big]
\neq \sum_{i=1}^n\min\left(b_i, \sum_{j}  \mathop{\mathbb{E}}\limits_{\p\sim D_{\y}}\left[p_j\cdot \boolOne[p_j \le \v_{i, j}]\right]\right)\,.
\]
Therefore, Problem~\eqref{prob:op_exp} cannot be formulated as a linear program, which also matches our \APX-hardness result.  %

Another approach is to consider a distribution over deterministic prices.
The continuous greedy algorithm is particularly relevant, achieving a $(1 - 1/e)$ approximation for monotone submodular maximization, potentially improving the above ratio of 2.
For the problem $\max_{S\in \mathcal{I}} f(S)$, where $f:2^U\to \mathbb{R}_{\ge 0}$ is submodular and $\mathcal{I}$ is a matroid constraint, the algorithm defines the multilinear extension of $f$ as the expectation of $f$ over a distribution $F(\y) = \E_{S\sim D_{\y}}[f(S)] = \sum_{S\subseteq U} y_S\cdot f(S)$, where $\y = (y_S)_{S\subseteq U}$ denotes a probability distribution.
The algorithm starts with $\y(0) = \bm{0}$ and then updates $\y(t+\delta) = \y(t) + \delta \cdot \z$ by choosing the $\z \in \mathcal{I}$ that maximizes $\z\cdot \nabla F(\y(t))$.
It finally outputs $\y(1)$ as a randomized solution.
Unfortunately, continuous greedy does not work for $k$-submodular functions in general.
The main issue is that a $k$-submodular function $f$ only defines the values when $S_1,\dots, S_k$ are disjoint and the function values are undefined otherwise.
For example, it is unclear what the revenue should be defined as if a dataset is put into both $S_i$ and $S_{i'}$ (where $i \neq i'$). As a result, the gradient of $F$ can be undefined at some point. In \cref{sec:k_sub_sub_partition}, we provide an example illustrating that a $k$-submodular maximization problem cannot, in general, be modeled as a submodular maximization problem subject to a partition matroid constraint.

Interestingly, although such an extension does not always exist in the general monotone $k$-submodular functions, our $n$-submodular revenue function can indeed be extended to the entire domain, yielding $(1-1/e)^{-1} \approx 1.582$-approximation, as stated below:
\begin{restatable}{theorem}{thmContinuousGreedy}
\label{thm:continuous_greedy}
There exists a randomized greedy $(1-1/e)^{-1}$-approximate algorithm to maximize the revenue of a data market.
\end{restatable}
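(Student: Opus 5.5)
The plan is to extend the $n$-submodular revenue function to a genuinely submodular set function on the ground set $U \defeq [m]\times[n]$, where element $(j,i)$ means ``dataset $j$ may be priced at $\v_{i,j}$''. For a set $A \subseteq U$ that may contain several pairs with the same $j$, I would set the price of dataset $j$ to the \emph{largest} value in $\{\v_{i,j} : (j,i)\in A\}$ (and $0$ if there is none), and define $f(A) \defeq r(\p^A)$.

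I would instead try a per-buyer extension: $f(A) = \sum_{i'} \min\bigl(b_{i'}, \sum_j \max\{\v_{i,j} : (j,i)\in A,\ \v_{i,j}\le \v_{i',j}\}\bigr)$. In words, each buyer $i'$ sees, for each dataset, the best price it is willing to pay among those offered. This coincides with $r(\mathcal{S})$ whenever $A$ has at most one pair per dataset, because then the single price is paid exactly when $\v_{i,j}\le \v_{i',j}$. Moreover $f$ is monotone, since adding pairs can only increase each inner max.

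The next step is to prove that $f$ is submodular. For fixed $i'$ and $j$, the map $A \mapsto \max\{\v_{i,j}: (j,i)\in A,\ \v_{i,j}\le \v_{i',j}\}$ is a weighted coverage / max-type function, and such functions are monotone submodular. Their sum over $j$ is therefore monotone submodular. Taking $\min(b_{i'},\cdot)$ of a monotone submodular function keeps it monotone submodular, as truncation by a concave nondecreasing function does. Summing over $i'$ preserves both properties.

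The optimization itself is $\max f(A)$ subject to the partition matroid ``at most one pair per dataset $j$''. On independent sets $f = r$, so the optimum equals the revenue optimum, using \cref{thm:rdisc}. I would run continuous greedy on the multilinear extension of $f$, with the gradient estimated by sampling as in Călinescu et al. This gives a fractional point $\y$ in the matroid polytope with $F(\y)\ge(1-1/e)\,\mathrm{OPT}-o(1)$.

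To round, I would use the fact that for a partition matroid one can round independently per block: pick at most one $i$ per $j$ with probability $y_{j,i}$. This is pipage or swap rounding, or simply independent choice per block. I would check that $\E[f(\text{rounded})] \ge F(\y)$, which holds for partition matroids via the standard argument that the multilinear extension is convex along directions $e_a - e_b$ within a block. The rounded set is independent, so its value is a genuine revenue $r(\p)$.

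I expect the main obstacle to be the choice of extension: it must make $f$ submodular rather than just agree with $r$ on disjoint tuples. The delicate check is the per-buyer max-type term together with the truncation by $b_{i'}$. I would verify the truncation step by the standard exchange argument: if $g$ is monotone submodular, then the marginals of $\min(b,g)$ are nonincreasing, with a case analysis on whether the threshold is already reached.

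The $o(1)$ additive loss from sampling would be removed in the usual way: either by guessing the largest element value, or by noting that the values are bounded by $\sum_i b_i$.
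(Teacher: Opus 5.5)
Your argument is correct and has the same architecture as the paper's. Both use $n$ copies of each dataset and the partition matroid allowing one copy per dataset. Both build a monotone submodular extension of $r$ that coincides with $r$ on independent sets, so that by \cref{thm:rdisc} $\max_{A\in\mathcal I} f(A)$ is the optimal revenue. Both then run the continuous greedy algorithm of \cite{calinescu2011maximizing}.

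Where you differ is the extension itself. The paper \emph{adds} all acceptable offered prices inside the budget cap: $\hat r(S)=\sum_i\min\bigl(b_i,\sum_j\sum_\ell \tau_{i,\ell,j}\boolOne[j^{(\ell)}\in S]\bigr)$. Its inner quantity is modular, so submodularity is just concavity of $\min(b_i,\cdot)$.

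You take the \emph{maximum} acceptable offered price per dataset, so your inner quantity is a sum of max-type functions. You therefore need the slightly stronger lemma that a concave nondecreasing function of a \emph{monotone} submodular $g$ is submodular. For $T\subseteq S$ this follows from $g(T)\le g(S)$ with concavity, then from $g(S+e)-g(S)\le g(T+e)-g(T)$ with monotonicity of the outer function; your case analysis supplies exactly this.

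Your extension lies pointwise below the paper's and has a cleaner economic reading, since no buyer pays twice for one dataset. This does not improve the guarantee, because continuous greedy is measured only against the optimum over $\mathcal I$ and the rounded set is independent. The paper's choice, in turn, gets a one-line submodularity proof.

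Your explicit rounding step usefully spells out what the paper leaves to the citation: choose one copy per dataset with probabilities $y_{j,i}$, and use $\E[f(R)]\ge F(\y)$ for the rounded set $R$.

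Two small clean-ups:
\begin{itemize}
\item Drop the opening construction $f(A)=r(\p^A)$ with the largest offered price. It is not even monotone, since raising a price can lose buyers.
\item Of your two ways to remove the sampling loss, only the first works in general: measure the error against $\max_e f(\{e\})\le\mathrm{OPT}$. Bounding values by $\sum_i b_i$ fails because $\sum_i b_i$ can be far larger than $\mathrm{OPT}$. Alternatively, simply invoke the clean $(1-1/e)$ bound of \cite{calinescu2011maximizing}.
\end{itemize}
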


Given the set of datasets $M$, we create $n$ copies for every dataset in $M$.
Let $C^j = \{j^{(1)}, \dots, j^{(n)}\}$ be the set of $n$ copies for dataset $j$.
Let $\bar{M} = \bigcup_j C^j$.
Let $S$ be a subset of $\bar{M}$.
If $j^{(i)} \in S$, we interpret it as that price of $\D_j$ is set as $\v_{i, j}$.
Define a partition matroid $\mathcal{I}$ as follows
\[
\mathcal{I} = \{S: \abs{S\cap C^j }\le 1\text{ for any }  j\in M\},
\]
which enforces that each price $p_j$ can only be chosen from one of $\v_{1, j}, \dots, \v_{n, j}$.
Hence, every set $S$ within the matroid corresponds to valid prices.
Meanwhile, the optimal prices also corresponds to a set within a matroid.
\emph{As discussed above, the main challenge in applying the continuous greedy algorithm, lies in defining a submodular function  $\hat{r}$ which is also well-defined
{and poly-time computable}, for sets in $2^{\bar{M}} \setminus \mathcal{I}$}.
We next give a simple example of how we extend $r$.
Consider two buyers and one dataset.
Suppose $\v_{1, 1} < \v_{2, 1}$ without loss of generality.
Then, according to the definition of $n$-submodular revenue function, we have
\begin{align*}
r(\{1\}, \emptyset) = \min(b_1, \v_{1, 1}) + \min(b_2, \v_{1,1}),\quad   r(\emptyset, \{1\}) = \min(b_1, 0) + \min(b_2, \v_{2,1})
\end{align*}
Correspondingly, the values of $\hat{r}$ on the partition matroid $\mathcal{I} = \{\emptyset, \{1^{(1)}\}, \{1^{(2)}\}\}$ are given by
\begin{align*}
\hat{r}(\{1^{(1)}\}) = \min(b_1, \v_{1, 1}) + \min(b_2, \v_{1,1}),\quad   \hat{r}(\{1^{(2)}\}) = \min(b_1, 0) + \min(b_2, \v_{2,1})
\end{align*}
To define the value of $\hat{r}$ at $\{1^{(1)}, 1^{(2)}\}$, one possibility is to
merge the values of $\hat{r}(\{1^{(1)}\})$ and $\hat{r}(\{1^{(2)}\})$ by
adding the inner terms inside the budget-cap functions, e.g.,
$\hat{r}(\{1^{(1)}, 1^{(2)}\}) = \min(b_1, \v_{1, 1} + 0) + \min(b_2, \v_{1,1} + \v_{2,1})$.
Since the min function $\min(\cdot, \cdot)$ is submodular over $\mathbb{R} \times \mathbb{R}$,
it can be verified that the above extension is submodular over $\bar{M}$.
Following the above insights, we can define the following extension of the $n$-submodular function $r$ to the entire domain $2^{\bar{M}}$ as follows:
\begin{align}
\label{eq:extension_sub}
\hat{r}(S) = \sum_{i=1}^n \min\left(b_i, \sum_{j=1}^m \sum_{\ell=1}^n \v_{i, \ell, j} \cdot \boolOne[j^{(\ell)}\in S]\right)
\text{where }
\v_{i, \ell, j} = \begin{cases}
\v_{\ell, j}, & \text{if } \v_{\ell, j}\le \v_{i, j}\\
0, & \text{otherwise}
\end{cases}
\end{align}
For the same reason, function \eqref{eq:extension_sub} is submodular and also monotone over $2^{\bar{M}}$.
Therefore, by applying the continuous greedy algorithm~\cite{calinescu2011maximizing} for the following optimization problem
\begin{align}
\max_{S\subseteq 2^{\bar{M}}} \hat{r}(S) \quad \text{subject to } S\in \mathcal{I},
\end{align}
we are able to get a $(1-1/e)^{-1}$ approximation! It is also worth noting that, although extending a partial submodular function to the entire domain is hard in general~\citep{bhaskar2018partial}, our construction is based on a closed-form function and hence runs in polynomial time.

\subsection{Compatibility with \emph{Supply-Aware} CE (SACE)}
\label{sec:overview:market-clear}

Recall that in rivalrous markets with quasi-linear utilities, every competitive equilibrium (CE) is also revenue-optimal. However, as shown in \cref{ex:inapprox_n_ce_se}, there exist instances in the non-rivalrous setting where no CE is revenue-optimal, illustrating a fundamental incompatibility between CE and revenue optimality (or SE) in such markets.

This raises a natural question: can we define a relaxation of CE that remains compatible with an SE?
We answer this in the affirmative by introducing the notion of a supply-aware competitive equilibrium (SACE). In a SACE, each buyer selects an optimal bundle subject to both her budget constraint and the supply constraints on the number of data records available per dataset. Every CE is a supply aware CE, though the converse need not hold. The concept of supply-aware equilibria has been explored in different contexts. For instance, \cite{ConitzerKPSSSW22} and \cite{ChenKK24} consider it in the context of auctions, while \cite{MehtaS13} investigates it in \emph{market games} where buyers strategically report their utility functions to improve their outcomes. In \cref{sec:market-clear}, we establish that in data markets with quasi-linear utilities, there always exists a revenue-optimal solution (i.e., an SE) that is also a supply-aware CE. Moreover, we present a polynomial-time procedure that transforms any approximate SE into a supply-aware CE without any loss in revenue. This implies that a $(1 - 1/e)^{-1}$-approximate SE that is simultaneously a supply-aware CE can be computed in polynomial time. Detailed proofs and algorithmic guarantees are provided in \cref{sec:market-clear}.

\section{Further Related Work}
\label{sec:related-work}

\paragraph{Data Economics.}
With the proliferation of AI and data-centric technologies across industries, data economics has rapidly evolved into a vital area of study. A full survey is well beyond the scope of this paper, but we list the work that most closely resonates with our contributions. In this paper, we have adopted a very generic model for data valuations, defining an agent's utility as an improvement in prediction accuracy. There have been more domain-specific studies on how agents value data~\citep{farboodi2025valuing, veldkamp2023valuing, farboodi2023data}.  We refer the reader to~\citet{fleckenstein2023review} for a detailed review of data valuation methods.

Given the significance of data to ML tasks, it is no surprise that there has been a significant body of work on data acquisition mechanisms that incentivize agents to share data~\citep{fallah2024optimal, cummings2023optimal, fallah2022bridging, MurhekarYCLM23, acemoglu2023good, ghosh2011selling, nissim2014redrawing, chen2018optimal, varian2009economic}, by compensating them for their privacy loss. There have been studies on mechanisms that incentivize sellers to truthfully report the variances of their datasets to a data aggregator, who is tasked with attaining a certain variance level for her prediction~\citep{cummings2015accuracy}.

Data-markets are two-sided marketplaces that receive prediction requests from buyers and address these requests from the datasets hosted on them by data sellers. Several questions on truthfulness, incentives, and revenue maximization have been studied in data markets. There is a line of work~\citep{admati1986monopolistic, admati1990direct,  bergemann2018design, BabaioffKP12} that investigates revenue-maximizing strategies of a monopolist data seller.
\citet{AgarwalDS19} design a truthful mechanism for a centralized online data marketplace. Instead of pricing datasets, their model charges buyers proportional to the increase in their prediction task's accuracy. This makes the pricing scheme discriminatory (buyers who benefit more from the data pay more), but can sometimes make it easier for buyers to assess if a dataset is worth paying for.
There is also a line of work that discusses equilibrium and auctions in data markets in the presence of externalities~\citep{AgarwalDHR20, Hossain024,ichihashi2021competing}.
Several studies investigate the pricing of data/information from other first principles in different settings~\citep{mehta2021sell, pei2020survey, cai2020sell, bergemann2022economics,horner2016selling}.
In contrast to our monetary setting, prior work has investigated stable solutions in data exchange economies, where agents exchange data without monetary transfers \citep{BhaskaraGIKMS24,akrami2025theoretical,song2025existence}.

\paragraph{Submodular maximization.}
Our problem of revenue maximization is closely relevant to submodular maximization subject to a matroid constraint.
It is widely known that a simple greedy algorithm achieves an approximation ratio of $2$ \citep{nemhauser1978analysis}.
The greedy algorithm also achieves approximation ratio of $(1-\eps)$ for some particular matroids, e.g., when the matroid is uniform $\mathcal{I} = \{S: \abs{S} \le k\}$.
Later, \citet{calinescu2011maximizing} improved the approximation ratio for monotone submodular maximization from $2$ to $(1 - 1/e)$ by leveraging the pipage rounding technique of \citet{ageev2004pipage} and the continuous greedy process of \citet{vondrak2008optimal}.
Subsequently, \citet{feldman2011unified} introduced a unified continuous greedy framework that also applies to non-monotone submodular functions.
Their framework achieves an approximation ratio of roughly $1/e$ for the non-monotone case while keeping the $(1 - 1/e)$ guarantee for monotone submodular functions.
An important application is submodular maximization subject to a \emph{partition matroid} constraint~\citep{chekuri2004maximum,fleischer2006tight,vondrak2008optimal,azar2011submodular,mirrokni2008tight,garg2023approximating}, which also generalizes our revenue maximization problem: a uniform set $U$ is partitioned into $k$ sets $U_1,\dots, U_k$ associated with numbers $\ell_1, \dots, \ell_k$ and a set $X$ is independent if $\abs{X\cap U_i} \le \ell_i$ for any $i\in [k]$.
The \textsf{Submodular Welfare} problem~\citep{feige2006approximation,vondrak2008optimal,mirrokni2008tight} is a notable special case of submodular maximization under a partition matroid constraint and closely related to our problem: there are a set of of $m$ items and $n$ agents, each of whom is associated with a submodular and monotone utility function $f_i:2^{[m]} \rightarrow \mathbb{R}_{\ge 0}$.
The goal is to find a partition $(X_1, \dots, X_n)$ to maximize the social welfare $\sum_{i=1}^n f_i(X_i)$.
The problem can be approximated within a ratio of $(1-1/e)$ using the continuous greedy \citep{mirrokni2008tight} in the value oracle model and the ratio is optimal~\citep{khot2005inapproximability}.
Our problem is similar in that we also try to find a partition of the datasets and maximizing the sum of revenue.
However, unlike the \textsf{Submodular Welfare} setting, our revenue function $r$ cannot be simply decomposed into functions $r_1,\dots, r_n$ where $r_i$ solely depends on $X_i$, since setting some item $j$'s price to $v_{i, j}$ can also affect the revenue from buyer $i'$.
The problem of $k$-submodular function maximization, introduced in~\citet{huber2012towards} is closely relevant to the problem of submodular maximization subject to a partition matroid.
Intuitively, $k$-submodularity only requires submodularity within the partition matroid.
Our example in \cref{sec:k_sub_sub_partition} demonstrated that a $k$-submodular function cannot always be extended to a submodular function in general.
As a result, the approximation ratios for $k$-submodular maximization are typically worse.
\citet{10.5555/2884435.2884465} prove that one may need an exponential number of queries to get an approximation ratio better than $2$ even for monotone $k$-submodular functions.
In contrast, as mentioned above, the optimal ratio for submodular maximization subject to partition matroid is $(1-1/e)$.
For non-monotone $k$-submodular functions, \citet{ward2016maximizing} prove that the greedy algorithm can achieves $1/(r+1)$-approximation if the function is $r$-wise monotone.

On the negative side, when the matroid is defined by a cardinality constraint, \citet{10.1145/285055.285059} proves that the problem of finding the maximum $k$-cover is hard to approximate within a factor of $(1-1/e + \eps)$, which also matches the approximation ratio of $(1-1/e)$.
Moreover, most of the above approximation algorithms or hardness results are typically defined with access to a value-based query oracle for $f$ and a membership-based oracle for $\mathcal{I}$, where a problem is considered hard to approximate if it cannot be approximated using a polynomial number of queries unless $\textsf{P} = \textsf{NP}$.
Another way to model is to assume the function $f$ and $\mathcal{I}$ are provided as polynomial-size circuits.
\citet{dobzinski2012query} prove that the inapproximability for the value oracle model still holds even if the input function $f$ is provided as a polynomial-size circuit.

\paragraph{Competitive equilibrium.}
The problem of computing CE in traditional rivalrous economies has been extensively studied. We focus here on the most relevant work for \emph{Fisher markets} with linear utilities. In such a market, there is a set of buyers and a set of goods, where each buyer has a budget and preferences over bundles of goods. At CE, each buyer receives an optimal bundle, and the market clears.
The CE in such markets is captured by the Eisenberg-Gale convex program \citep{eisenberg1959consensus}, which maximizes Nash social welfare---defined as the geometric mean of agents' utilities. Later, \citet{Shmyrev09} proposed an alternative convex formulation for this problem. \citet{DGJMVY17} establish duality connections between these and other related convex programs, and also provide a convex formulation for the case of quasi-linear utilities.
\citet{DevanurPSV08} give a combinatorial polynomial-time algorithm for computing CE, which was subsequently improved to strongly polynomial-time algorithms in \citet{Orlin10} and \citet{Vegh16}.
Polynomial-time computability extends to more general classes such as homogeneous and weak gross substitutes (WGS) utility functions~\citep{Eisenberg61,codenotti2005polynomial,BeiGH19}, but beyond these, computing a CE becomes essentially PPAD-hard, particularly when the CE set is non-convex~\citep{ChenT09,ChenDDT09,codenotti2006leontief,chen2017complexity,GargMVY17,Rubinstein18,DeligkasFHM24}.

\paragraph{Stackelberg equilibrium.}
Stackelberg games~\citep{von1934marktform} model strategic interactions between a leader, who commits to a strategy first, and one or more followers, who observe the leader’s choice and best respond to it. These games naturally capture many real-world scenarios, including pricing, security, and finance, and can be viewed as extensions of the classic minimax optimization problem~\citep{neumann1928} to the more complex setting of general-sum games.
Computing Stackelberg equilibria is challenging due to the bi-level structure of the problem: the leader must optimize its payoff while anticipating followers’ best responses. This formulation often results in non-convex, combinatorial optimization problems that are NP-hard in general~\citep{Roughgarden01,conitzer2006computing,KorzhykCP10}.
Several works have developed efficient algorithms for special cases or restricted domains; see e.g.,~\citep{vonstengel2010leadership,blum2019computing,GoktasZG22}. Stackelberg games provide a powerful framework for modeling leader-follower interactions, though computational tractability remains a key challenge.

\section{Approximation Algorithm}
\label{sec:approx-algo}

In this section, we give approximation algorithms for revenue-maximization in a data market.
We begin with a definition of $k$-submodularity, a key ingredient in our first result.

\begin{definition}[$k$-submodularity]
Let $\Pi_k(U)$ be the set of all $k$-tuples of disjoint subsets of $U$, i.e.,
$\Pi_k(U) \defeq \{(S_1, \ldots, S_k): S_i \subseteq U
    \text{ and } S_i \cap S_j = \emptyset \text{ for all } i \neq j\}$.
Given a finite nonempty set $U$, a function $f: \Pi_k(U) \rightarrow \mathbb{R}_{\ge 0}$ defined on $k$ disjoint subsets of $U$ is called \emph{$k$-submodular} if for all tuples $(S_1, \dots, S_k) \in \Pi_k(U)$ and $(T_1, \dots, T_k) \in \Pi_k(U)$ of disjoint subsets of $U$, we have
\[ f(S_1, \dots, S_k) + f(T_1, \dots, T_k) \ge f((S_1, \dots, S_k) \sqcup (T_1, \dots, T_k)) + f((S_1, \dots, T_k) \sqcap (T_1, \dots, T_k)), \]
where we define
\begin{align*}
(S_1, \dots, T_k) \sqcap (T_1, \dots, T_k) &= (S_1\cap T_1, \dots, S_k \cap T_k)
\\ (S_1, \dots, T_k) \sqcup (T_1, \dots, T_k) &= \Big((S_1\cup T_1)\setminus \bigcup_{j\neq 1}(S_j\cup T_j),\dots, (S_k\cup T_k)\setminus \bigcup_{j\neq k}(S_j\cup T_j) \Big)\,.
\end{align*}
\end{definition}

We now analyze the greedy algorithm: initialize $\p=\vecZero$, and iteratively
raise the price of each dataset as much as possible---up to the point
where further increases would cause the revenue to decline.
We show that this algorithm is 2-approximate.

\begin{lemma}
\label{thm:rev-submod}
In a data market instance, let $r(S_1, \ldots, S_n)$ be the revenue when
for each buyer $i \in [n]$, each dataset $j \in S_i$ has price $\v_{i,j}$,
and each dataset outside $S_1 \cup \ldots \cup S_n$ has price 0.
Then the revenue function $r(S_1, \dots, S_n)$ is monotone and $n$-submodular.
\end{lemma}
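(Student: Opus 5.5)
We will use the standard characterization of $k$-submodularity due to Ward and Živný: a function on $\Pi_k(U)$ is $k$-submodular if and only if it is \emph{orthant submodular} and \emph{pairwise monotone}. Once monotonicity is established, pairwise monotonicity is immediate. So the proof reduces to two facts: monotonicity, and coordinate-wise diminishing returns, exactly as displayed in \cref{sec:overview:approx-algo}.

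The first step is to write the revenue in closed form. By \cref{eq:ri}, for a tuple $\mathcal{S}=(S_1,\dots,S_n)$ we have
\[
r(\mathcal{S})=\sum_{k=1}^n \min\Big(b_k,\ \sum_{i=1}^n\sum_{j\in S_i}\v_{i,j}\,\boolOne(\v_{i,j}\le \v_{k,j})\Big).
\]
Datasets outside $\bigcup_i S_i$ have price $0$ and contribute nothing. Write $g_k(\mathcal{S})$ for the inner sum, and let $w_{k}(i,j)\defeq \v_{i,j}\boolOne(\v_{i,j}\le\v_{k,j})\ge 0$.

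Monotonicity follows from this form. Adding a dataset $j\notin\bigcup_\ell S_\ell$ to some $S_i$ raises each $g_k$ by $w_k(i,j)\ge 0$ and leaves every other term unchanged. Each summand $\min(b_k,\cdot)$ is nondecreasing in $g_k$, so $r$ does not decrease.

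For diminishing returns, take $T_\ell\subseteq S_\ell$ for all $\ell$ and $j\notin\bigcup_\ell S_\ell$. Then $g_k(\mathcal{T})\le g_k(\mathcal{S})$ for every buyer $k$, since the summands are nonnegative and indexed by a subset. Both marginals add the same increment $c_k=w_k(i,j)$ to $g_k$. The map $x\mapsto \min(b_k,x+c_k)-\min(b_k,x)$ is nonincreasing in $x$ for $c_k\ge0$, because $\min(b_k,\cdot)$ is concave. Summing over $k$ gives
\[
r(\mathcal{S}+j\to i)-r(\mathcal{S})\le r(\mathcal{T}+j\to i)-r(\mathcal{T}).
\]

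It remains to pass from these two properties to the lattice inequality in the definition. The main obstacle is that the $\sqcup$ operation deletes elements assigned to different coordinates in $S$ and in $T$. We would first try to invoke the Ward–Živný equivalence, which requires orthant submodularity plus pairwise monotonicity. Orthant submodularity is exactly the diminishing-returns property above, restricted to comparable tuples; pairwise monotonicity follows from monotonicity.

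If a self-contained argument is preferred, we would argue directly. Let $A$ be the set of elements assigned consistently in both tuples or in only one of them, so that $\mathcal{S}\sqcup\mathcal{T}$ consists of exactly these assignments. Build $\mathcal{S}\sqcup\mathcal{T}$ from $\mathcal{S}\sqcap\mathcal{T}$ by adding the elements of $A$ one by one. Charge each addition to either the chain from $\mathcal{S}\sqcap\mathcal{T}$ to $\mathcal{S}$ or the chain to $\mathcal{T}$, using diminishing returns. Then use monotonicity to drop the conflicting elements: these are present in $\mathcal{S}$ and $\mathcal{T}$ but absent from the join, so discarding them only lowers the right-hand side. This gives $r(\mathcal{S})+r(\mathcal{T})\ge r(\mathcal{S}\sqcup\mathcal{T})+r(\mathcal{S}\sqcap\mathcal{T})$.
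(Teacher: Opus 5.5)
Your proposal is correct and follows essentially the paper's argument: monotonicity, then coordinate-wise diminishing returns because shrinking the tuple only lowers each buyer's spending $g_k$ and $\min(b_k,\cdot)$ is concave (the paper phrases this as every buyer having more remaining budget under the smaller tuple), then the passage to the lattice inequality by using monotonicity to strip conflicting assignments and running the standard chain argument; your main route packages that last step as the Ward--Zivny characterization (orthant submodularity plus pairwise monotonicity). One small slip in your self-contained version: removing the conflicting elements from both $\mathcal{S}$ and $\mathcal{T}$ leaves $\mathcal{S}\sqcup\mathcal{T}$ and $\mathcal{S}\sqcap\mathcal{T}$ unchanged and lowers the \emph{left}-hand side (not the right-hand side), which is the direction you need.
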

\begin{proof}
It is clear that $r(S_1, \dots, S_n)$ is monotone, since setting a price to
a positive value yields (weakly) higher revenue than setting it to zero.
Next, we prove the $n$-submodularity.
Since $r(\cdot)$ is monotone, it suffices to prove that
$S_i\cap T_{i'} = T_{i} \cap S_{i'} = \emptyset$ for any $i\neq i'$.
Otherwise, if $S_i\cap T_{i'} \neq \emptyset$ we can reduce $S_i$ to $S_i \setminus T_{i'}$,
which does not change the right-hand side and weakly reduces the left-hand side.
Using a similar argument as in the proof of the equivalence between
the set-function and diminishing-returns definitions of submodularity,
it suffices to prove the following condition
\begin{align*}
r(S_1, \dots, S_i\cup \{j\}, \dots, S_n) - r(S_1, \dots, S_n) \le
r(T_1, \dots, T_i\cup \{j\}, \dots, T_n) - r(T_1, \dots, T_n)
\end{align*}
for any $T_{i'}\subseteq S_{i'}, i'\in [n]$ and $j\in M\setminus S_i$.

Comparing the prices corresponding to $(S_1,\dots, S_n)$ (denoted by $\p^S$)
and $(T_1, \dots, T_n)$ (denoted by $\p^T$),
we can observe that every buyer has more remaining budget under $\p^T$,
since the prices are either set the same as in $\p^S$ or zero.
Therefore, changing the price of $\D_j$ from $0$ to $\v_{i, j}$
results in a weakly higher revenue increase under $\p^T$ than under $\p^S$,
which leads to the $n$-submodularity.
\end{proof}

\begin{lemma}
\label{lem:2_approx_linear}
For maximizing revenue in a data market, the greedy algorithm is $2$-approximate,
and this is nearly the best possible approximation ratio for the greedy algorithm.
\end{lemma}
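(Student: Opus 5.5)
We prove the upper bound with the standard greedy analysis for monotone $k$-submodular maximization, specialized to $r$ through \cref{thm:rdisc} and \cref{thm:rev-submod}. The lower bound comes from an explicit instance.

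\textbf{Setup.} By \cref{thm:rdisc}, an optimal price vector can be taken to lie in $\prod_j P_j$. It therefore corresponds to a partition $\mathcal{O}=(O_1,\dots,O_n)$ covering all of $M$, so $r(\mathcal{O})=\mathrm{OPT}$. Fix the order $j_1,\dots,j_m$ used by the greedy algorithm. Let $\mathcal{S}^{(\ell)}$ be the greedy tuple after $\ell$ steps, where $j_\ell$ is assigned to the part $i_\ell$ maximizing the marginal gain. We assume \wLoG{} that greedy assigns every dataset; by monotonicity (\cref{thm:rev-submod}) every marginal gain is nonnegative.

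\textbf{Hybrid sequence.} Following Ward--\v{Z}ivn\'y and Iwata et al., define hybrid tuples $\mathcal{O}^{(\ell)}$. Take $\mathcal{O}^{(\ell)}$ to be $\mathcal{O}$ with the datasets $j_1,\dots,j_\ell$ reassigned to their greedy parts, so that $\mathcal{O}^{(0)}=\mathcal{O}$ and $\mathcal{O}^{(m)}=\mathcal{S}^{(m)}$. Also let $\mathcal{T}^{(\ell)}$ be $\mathcal{O}^{(\ell-1)}$ with $j_\ell$ removed. Then $\mathcal{S}^{(\ell-1)}\preceq\mathcal{T}^{(\ell)}$ componentwise, and $j_\ell$ lies in neither. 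Write $o_\ell$ for the part containing $j_\ell$ in $\mathcal{O}$. We get
\begin{align*}
r(\mathcal{O}^{(\ell-1)})-r(\mathcal{O}^{(\ell)})
&=\big[r(\mathcal{T}^{(\ell)}+j_\ell\to o_\ell)-r(\mathcal{T}^{(\ell)})\big]-\big[r(\mathcal{T}^{(\ell)}+j_\ell\to i_\ell)-r(\mathcal{T}^{(\ell)})\big]\\
&\le r(\mathcal{T}^{(\ell)}+j_\ell\to o_\ell)-r(\mathcal{T}^{(\ell)})\\
&\le r(\mathcal{S}^{(\ell-1)}+j_\ell\to o_\ell)-r(\mathcal{S}^{(\ell-1)})\\
&\le r(\mathcal{S}^{(\ell)})-r(\mathcal{S}^{(\ell-1)}).
\end{align*}
Each step is justified as follows:
\begin{itemize}
\item The first inequality drops the nonnegative term, which is nonnegative by monotonicity.
\item The second inequality is the coordinate-wise diminishing-returns property from the proof of \cref{thm:rev-submod}.
\item The third inequality holds because greedy picked $i_\ell$ to maximize the marginal gain.
\end{itemize}
Summing over $\ell$ gives $\mathrm{OPT}-r(\mathcal{S}^{(m)})\le r(\mathcal{S}^{(m)})-r(\emptyset)\le r(\mathcal{S}^{(m)})$, which is the factor $2$. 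The online claim is immediate because the order is arbitrary.

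\textbf{Lower bound.} The main obstacle is building an instance on which greedy attains ratio $2-\Theta(1/n)$. The plan is to exploit the tie-breaking and myopia of greedy. Take a first dataset on which every buyer has the same value equal to their budget, say $1$, except for one distinguished choice. Greedy prices this dataset at a value that exhausts all budgets early, collecting roughly $n$. Meanwhile OPT leaves the first dataset cheap or unsold and instead extracts about $2n-1$ through later datasets whose values are buyer-specific, each valued at about $1$ only by buyer $i$.

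Concretely, I would attempt the following instance. There are datasets $j_0,j_1,\dots,j_{n-1}$ and all budgets equal $1$. Dataset $j_0$ has value $1$ for all buyers, so greedy prices it at $1$ and earns $n$, after which every later gain is $0$. Each $j_k$ is valued $1$ only by buyer $k$, and buyer $n$ has value $\eps$ on $j_0$, so that OPT earns nearly $2n-1$ by a suitable mixture. Tuning the values so that greedy still strictly prefers pricing $j_0$ at $1$ is the delicate part, and I expect to iterate on the parameters there. The ratio should come out to $(2n-1)/n=2-1/n$.
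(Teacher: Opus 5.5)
Your upper bound is correct. It is the argument the paper relies on: the paper simply cites \cite{ward2016maximizing}, and your hybrid sequence reproduces that proof correctly for this $r$. Each step is justified: monotonicity drops the nonnegative term, the diminishing-returns inequality from the proof of \cref{thm:rev-submod} applies because $\mathcal{S}^{(\ell-1)}\preceq\mathcal{T}^{(\ell)}$, greedy's choice gives the third inequality, and $r(\emptyset,\dots,\emptyset)=0$ closes the telescoping sum.

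The gap is in the tightness part: the instance you sketch cannot give a ratio near $2$. A buyer never pays more than her budget, so $\mathrm{OPT}\le\sum_i b_i$. With all budgets equal to $1$ this gives $\mathrm{OPT}\le n$, so ``OPT earns nearly $2n-1$'' is impossible. Concretely, in your instance greedy prices $j_0$ at $1$ (gain $n-1$) rather than $\varepsilon$ (gain $n\varepsilon$), and buyers $1,\dots,n-1$ exhaust their budgets. (If buyer $n$ also valued $j_0$ at $1$, greedy would collect all $n$ and be optimal; your description is inconsistent on this point.) Buyer $n$ values nothing above $\varepsilon$, so she contributes at most $\varepsilon$ to any solution. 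Hence $\mathrm{OPT}\le n-1+\varepsilon$, and the ratio is essentially $1$. No tuning of values helps while all budgets are $1$ and greedy already earns about $n$.

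What is missing is a single large-budget buyer whose high value on the first dataset greedy myopically gives up. The paper's instance is as follows:
\begin{itemize}
\item Buyers $1,\dots,n-1$ have budget $1+\varepsilon$ and values $\tau_{i,1}=1+\varepsilon$, $\tau_{i,2}=1$.
\item Buyer $n$ has budget $n+1$ and values $\tau_{n,1}=n$, $\tau_{n,2}=1+\varepsilon$.
\end{itemize}
Processing $\mathcal{D}_1$ first, greedy prefers $p_1=1+\varepsilon$ (gain $n(1+\varepsilon)$) to $p_1=n$ (gain $n$). This exhausts the small buyers' budgets, so on $\mathcal{D}_2$ the best gain is $1+\varepsilon$ from buyer $n$, for a total of $(n+1)(1+\varepsilon)$. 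The prices $(n,1)$ instead collect $n-1$ from the small buyers via $\mathcal{D}_2$ and the full $n+1$ from buyer $n$, i.e., $2n$. The ratio is $2n/((n+1)(1+\varepsilon))\to 2-2/(n+1)$. Note that the optimum here prices the first dataset high, not ``cheap or unsold'' as in your plan. The loss comes from greedy choosing the low price that reaches many small buyers, which destroys the single high-value sale.
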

\begin{proof}
According to \cite{ward2016maximizing}, there exists a greedy algorithm
that achieves an approximation ratio of $2$ for a monotone $n$-submodular function.
The algorithm first specifies an arbitrary order $\sigma$ of elements of $U$.
At round $i$, it assigns the $i$-th element to the subset with the largest marginal increase.
\cite{10.5555/2884435.2884465} then improved the approximation ratio to
$2-1/n$ by using a randomized variant of the greedy algorithm.
Since the revenue function is monotone and $n$-submodular by \cref{thm:rev-submod},
the above results apply to our setting and show that the greedy algorithm is $2$-approximate.

It is worth noting that the deterministic greedy algorithm
can be adapted to an \emph{online} setting, where datasets arrive sequentially,
and still achieves a competitive ratio of $2$.

Due to the last point, one may wonder whether it is possible for the
deterministic algorithm to achieve the optimal revenue for \emph{some} sequence $\sigma$.
However, it does not always hold, and the following example shows that
the deterministic greedy can be suboptimal over all possible sequences $\sigma$.

\begin{example}[Deterministic greedy is suboptimal over all $\sigma$]
\label{exampe:deterministic_greedy_suboptimal}
Consider a marketplace with two buyers and three datasets.
Both buyers have budgets of $1$.
The values are set as follows: $\v_{1,1} = \v_{1, 2} = 0.2$, $\v_{1,3} =0$ and $\v_{2,1} = \v_{2,2} = 0.6$, $\v_{2,3} = 0.5$, as illustrated in \cref{fig:greedy_suboptimal}.
The optimal price vector is $\p = (0.2, 0.2, 0.5)$, which yields a revenue of $r(\p) = 0.4 + 0.9 = 1.3$.
However, the greedy algorithm will prioritize setting the price of the first two datasets to $0.6$ to fully use up buyer $2$'s budget.
As a result, the revenue gained from buyer 1 can at most be $0.2$, leading to a total revenue of at most $1 + 0.2 = 1.2$.
Therefore, the deterministic greedy will always return a suboptimal pricing.
\begin{figure}[ht]
\centering
\begin{tikzpicture}[
    agent/.style={circle, draw, fill=blue!50!black!50, text=bgColor, minimum size=0.2cm, font=\small},
    job/.style={rectangle, draw, fill=green!50!black!50, text=bgColor, minimum size=0.2cm, font=\small},
    edge/.style={thick},
]

\node[agent] (a1) at (0, 1) {$a_1$};
\node[agent] (a2) at (0, 0) {$a_2$};

\node[job] (j1) at (2, 2) {$j_1$};
\node[job] (j2) at (2, 0.5) {$j_2$};
\node[job] (j3) at (2, -1) {$j_3$};
\draw[edge] (a1) -- (j1) node[midway, above] {\scriptsize 0.2};
\draw[edge] (a2) -- (j1) node[midway, above] {\scriptsize 0.6};
\draw[edge] (a1) -- (j2) node[midway, above] {\scriptsize 0.2};
\draw[edge] (a2) -- (j2) node[midway, above] {\scriptsize 0.6};
\draw[edge] (a2) -- (j3) node[midway, above] {\scriptsize 0.5};

\node at (2.75, 2) {\scriptsize \textcolor{red!75!textHeavy}{$\$$0.2}};
\node at (2.75, 0.5) {\scriptsize \textcolor{red!75!textHeavy}{$\$$0.2}};
\node at (2.75, -1) {\scriptsize \textcolor{red!75!textHeavy}{$\$$0.5}};
\node at (-1, 1) {\small $1$};
\node at (-1, 0) {\small $1$};
\end{tikzpicture}

\caption{Example where the deterministic greedy algorithm is suboptimal over all $\sigma$}
\label{fig:greedy_suboptimal}
\end{figure}
\end{example}

Finally, we provide an example to show that $2$ is nearly the best ratio for the deterministic greedy algorithm.
Consider a data market instance with $n$ buyers and two datasets. The parameters are set as follows:
$b_i = 1+\eps$ and $\v_{i, 1} = 1+\eps, \v_{i, 2} = 1$ for any $i \in [n-1]$;
$b_n = n+1$ and $\v_{n+1, 1} = n, \v_{n+1, 2} = 1+\eps$,
where $\eps$ is a sufficiently small positive constant.
Suppose the algorithm sets the prices in the order of $\D_1, \D_2$.
For $p_1$, the marginal increase of the revenue $r$ is $n\cdot(1+\eps)$ when $p_1 = 1+\eps$,
while the increase is $n$ by setting $p_1 = n$.
Afterward, since the budgets of all the first $n-1$ buyers have been used up,
setting $p_2 = 1+\eps$ increases the revenue the most.
Therefore, the price vector output by the greedy algorithm is $(1+\eps, 1+\eps)$,
leading to a revenue of $(n+1)(1+\eps)$.
However, the optimal prices under this instance is $p_1^* = n$ and $p_2^* = 1$, leading to a revenue of $2n$.
The approximation ratio is $\frac{2n}{(n+1)(1+\eps)} \approx 2 - 2/(n+1)$ for small $\eps$.
\end{proof}

\begin{theorem}
The problem of revenue maximization in a data market
has a randomized $(1-1/e)^{-1}$-approximation algorithm.
\end{theorem}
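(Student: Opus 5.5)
We reduce to monotone submodular maximization over a partition matroid, as sketched in \cref{sec:overview:approx-algo}. Form the ground set $\bar M=\bigcup_j C^j$ with $C^j=\{j^{(1)},\dots,j^{(n)}\}$, the partition matroid $\mathcal I=\{S:\abs{S\cap C^j}\le 1\ \forall j\}$, and the extension $\hat r$ from \eqref{eq:extension_sub}. The plan has four steps.

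\textbf{Step 1: $\hat r$ agrees with $r$ on independent sets.} For $S\in\mathcal I$, let $\p^S$ be the price vector with $p_j=\v_{\ell,j}$ if $j^{(\ell)}\in S$, and $p_j=0$ otherwise. The inner sum for buyer $i$ is $\sum_j \v_{i,\ell,j}\boolOne[j^{(\ell)}\in S]$. This equals $\sum_j p_j\boolOne(\v_{i,j}\ge p_j)$, because $\v_{i,\ell,j}=\v_{\ell,j}$ exactly when $\v_{\ell,j}\le \v_{i,j}$, and datasets priced at $0$ contribute nothing. So $\hat r(S)=r(\p^S)$ by \cref{eq:ri}.

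\textbf{Step 2: the optimum is preserved.} By \cref{thm:rdisc}, some optimal price vector has $p_j\in P_j$ for every $j$. It therefore corresponds to some $S^*\in\mathcal I$ with $\hat r(S^*)=\mathrm{OPT}$. Hence $\max_{S\in\mathcal I}\hat r(S)=\mathrm{OPT}$, and conversely every independent set yields a valid price vector with the same revenue.

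\textbf{Step 3: $\hat r$ is monotone submodular on $2^{\bar M}$.} For each buyer $i$, the map $g_i(S)=\sum_{e\in S} w_{i,e}$ with weights $w_{i,j^{(\ell)}}=\v_{i,\ell,j}\ge 0$ is a nonnegative modular function. Composing it with the concave nondecreasing function $x\mapsto\min(b_i,x)$ gives a monotone submodular function. The reason is that for $A\subseteq B$ and $e\notin B$, the marginal is $\min(b_i,g_i(A)+w)-\min(b_i,g_i(A))$, and this is nonincreasing in $g_i(A)$ by concavity. A sum of monotone submodular functions is monotone submodular. $\hat r$ is also evaluable in $O(nm)$ time per query, which is the key point: the general partial-extension difficulty of \cite{bhaskar2018partial} does not arise, because $\hat r$ is given in closed form.

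\textbf{Step 4: run continuous greedy.} Apply the continuous greedy algorithm with pipage/swap rounding of \cite{calinescu2011maximizing} to $\max_{S\in\mathcal I}\hat r(S)$. In expectation it returns an $S\in\mathcal I$ with $\E[\hat r(S)]\ge(1-1/e-o(1))\,\mathrm{OPT}$. The multilinear extension's gradient is estimated by sampling in polynomial time. Rounding the partition matroid keeps $S$ independent, and Step 1 converts $S$ to prices $\p^S$ with $r(\p^S)=\hat r(S)$.

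\textbf{Expected obstacle.} The main issue is the $o(1)$ loss from sampling and discretization, which would give only $(1-1/e-\eps)$. I would remove it in one of two ways. One option is to invoke the exact $(1-1/e)$ variant of \cite{calinescu2011maximizing} (with the partial-enumeration or rescaling trick), which the statement presumably allows. The other is to compute the multilinear extension of each $\min(b_i,\cdot)$ term more directly. A budget-additive term does not have an easy exact expectation, so I would rely on the cited algorithm's guarantee rather than derive it. The remaining checks — Steps 1 and 3, including ties among the $\v_{\ell,j}$ values — are routine.
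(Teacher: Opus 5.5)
Your proposal is correct and follows the paper's proof essentially step for step. Both use the same $n$-copy ground set and partition matroid. Both use the closed-form extension $\hat r$, shown monotone submodular because $\min(b_i,\cdot)$ is concave and nondecreasing and is composed with a nonnegative modular function. Both verify $\hat r(S)=r(\p^S)$ on independent sets and then apply the continuous greedy of \cite{calinescu2011maximizing}.

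Your explicit appeal to \cref{thm:rdisc}, which guarantees the optimum is attained by some $S\in\mathcal I$, is a point the paper leaves implicit. As you suspected, the $o(1)$ loss needs no separate treatment, since \cite{calinescu2011maximizing} states a clean $(1-1/e)$ guarantee in expectation.
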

\begin{proof}
Let $U$ be the set of datasets $M$.
Define the ground set $U'$ by creating $n$ copies of every element of $U$.
Denote by $x^{(i)}$ the $i$-th copy of element $x$.
Then we introduce a function $\hat{r}$ defined over $2^{U'}$ as follows:
\begin{align*}
\hat{r}(S) = \sum_{i=1}^n \hat{r}_i(S) = \sum_{i=1}^n \min\left(b_i, \sum_{j=1}^m \sum_{\ell=1}^n \v_{i, \ell, j} \cdot \mathbbold{1}[j^{(\ell)}\in S]\right), \quad
\text{where }
\v_{i, \ell, j} = \begin{cases}
\v_{\ell, j}, & \text{if } \v_{\ell, j}\le \v_{i, j}\\
0, & \text{otherwise}
\end{cases}
\end{align*}
Since $\v_{i, \ell, j}$ are all non-negative, then $\hat{r}(\cdot)$ is monotone.
Next, we prove that $\hat{r}(\cdot)$ is submodular.
Since the summation does not destroy submodularity, it suffices to prove every inner term of $\hat{r}_i(S)$ is submodular, equivalently, for every $T\subseteq S$ and $e =j'^{(\ell')}\in U\setminus S$,  $\hat{r}_i(S\cup \{e\}) - \hat{r}(S) \le \hat{r}(T\cup \{e\}) - \hat{r}(T)$.
Notice that,
\begin{align*}
&\hat{r}_i(S\cup \{e\}) - \hat{r}_i(S) \\
& = \min\Big(b_i, \sum_{j=1}^m \sum_{\ell=1}^n \v_{i, \ell, j} \cdot \mathbbold{1}[j^{(\ell)}\in S\cup \{e\}]\Big) -
\min\Big(b_i, \sum_{j=1}^m \sum_{\ell=1}^n \v_{i, \ell, j} \cdot \mathbbold{1}[j^{(\ell)}\in S]\Big)\\
& = \min\Big(b_i, \sum_{j=1}^m \sum_{\ell=1}^n \v_{i, \ell, j} \cdot \mathbbold{1}[j^{(\ell)}\in S]+ \v_{i, \ell', j'}\Big) -
\min\Big(b_i, \sum_{j=1}^m \sum_{\ell=1}^n \v_{i, \ell, j} \cdot \mathbbold{1}[j^{(\ell)}\in S]\Big) \\
& \le  \min\Big(b_i, \sum_{j=1}^m \sum_{\ell=1}^n \v_{i, \ell, j} \cdot \mathbbold{1}[j^{(\ell)}\in T]+ \v_{i, \ell', j'}\Big) -
\min\Big(b_i, \sum_{j=1}^m \sum_{\ell=1}^n \v_{i, \ell, j} \cdot \mathbbold{1}[j^{(\ell)}\in T]\Big)\\
& = \hat{r}_i(T\cup \{e\}) - \hat{r}_i(T),
\end{align*}
where the inequality is due to the submodularity of the min function and $T\subseteq S$.
Therefore, the function $\hat{r}_i$ is submodular.

Next, consider the partition matroid $\mathcal{I}$: $\mathcal{I} = \{S: \abs{S \cap \{j^{(i)}\}_{i=1}^n} \le 1 \text{ for any }j\in M\}$, where every set in $\mathcal{I}$ contains at most one copy of every element of $M$.
We show a bijective mapping between $\hat{r}(\cdot)$ defined on each set of the partition matroid and $r(\cdot)$ defined on an $n$-disjoint tuple.
For every set $S$, we define $(S_1, \dots, S_n)$ as follows: $j\in S_i$ only if $j^{(i)}\in S$.
Since $S$ belongs to the partition matroid, then $S_1,\dots, S_n$ are disjoint.
In addition, it can be checked that $\hat{r}(S)$ equals the corresponding value of $r(S_1, \dots, S_n)$.

Therefore, the problem of maximizing $r(S_1, \dots, S_n)$ reduces to the problem of maximizing $\hat{r}(S)$ subject to the partition matroid $S\in \mathcal{I}$.
By \cite{calinescu2011maximizing}, the continuous greedy algorithm can give a randomized solution that achieves an approximation ratio of $(1-1/e)^{-1}$ in expectation.
By randomly selecting the prices corresponding to each possible $S$,
we achieve an expected approximation ratio of $(1-1/e)^{-1}$ for the revenue.
\end{proof}

\section{Inapproximability Results}
\label{sec:inapprox}

We start with a simple reduction to show that finding the optimal prices is \NP-hard even for two buyers and under equal budgets.

\begin{lemma}
Finding the optimal prices is \NP-hard even under equal budgets.
\end{lemma}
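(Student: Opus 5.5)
We reduce from \textsc{Partition}, and we expect two buyers with equal budgets to suffice. The instance is a multiset of positive integers $a_1,\dots,a_m$ with $\sum_j a_j = 2B$, and the question is whether some subset sums to exactly $B$. We create one dataset per integer and two buyers, both with budget $B$. Buyer 1 has value $\v_{1,j} = a_j$ for every dataset $j$. Buyer 2 has value $\v_{2,j} = c\,a_j$ for some constant $c>1$, say $c=2$. By \cref{thm:rdisc} we may assume each price satisfies $p_j \in \{a_j, 2a_j\}$. Pricing $j$ at $a_j$ means both buyers are willing to buy it. Pricing $j$ at $2a_j$ means only buyer 2 buys it, at twice the amount.

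We then compute the revenue as a function of the set $L$ of datasets priced low, using \cref{eq:ri}. Write $a(L)=\sum_{j\in L}a_j$. Buyer 1 pays $\min(B, a(L))$. Buyer 2 pays $\min(B, a(L) + 2(2B - a(L))) = \min(B, 4B - a(L))$, which equals $B$ since $a(L)\le 2B$. The total revenue is therefore $B + \min(B, a(L))$, which reaches $2B$ as soon as $a(L)\ge B$. This is trivial (take $L$ to be everything), so the gadget must be adjusted so that buyer 2's budget only binds when the high-priced set is large.

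The key step is to set the budget so that both constraints are tight exactly at a balanced partition. We keep equal budgets $b$ but tune the values: $\v_{1,j}=a_j$ and $\v_{2,j}=c\,a_j$, with buyer 2 unable to afford the low-priced items cheaply enough. A cleaner choice is to make buyer 2 \emph{not} value the low price. That happens automatically, since buyer 2 buys every item priced at most $c a_j$, low-priced items included. So buyer 2's revenue is $\min(b, a(L) + c(2B-a(L)))$, which is decreasing in $a(L)$, while buyer 1's revenue $\min(b, a(L))$ is increasing in $a(L)$. We choose $b$ so that $b = B = $ buyer 1's intake at $a(L)=B$. Buyer 2's intake at $a(L)=B$ is $B + cB = (1+c)B$. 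To equalize this we need a different budget, which breaks the equal-budget requirement. We instead rescale buyer 1's values to $\v_{1,j}=(1+c)a_j$? That does not work either, because then the low price coincides with the high price.

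We therefore settle on the following: equal budgets $b$, $\v_{1,j}=a_j$, $\v_{2,j}=c a_j$, and $b$ chosen with $b\ge$ all relevant sums. The revenue $f(x)=\min(b,x)+\min(b, cB\cdot 2 - (c-1)x)$, where $x=a(L)$, is a concave piecewise-linear function with a kink at $x=b$ and a kink at $x=(2cB-b)/(c-1)$. We pick $b$ so that both kinks coincide at $x=B$, which requires $b=B$ and $2cB - B = (c-1)B$, i.e.\ $c=-1$. This is impossible, so a single value family cannot work.

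The main obstacle is this calibration. We expect to fix it by adding a padding dataset $0$ with $\v_{1,0}=0$ and $\v_{2,0}=D$, together with another dataset valued only by buyer 1. These shift buyer 2's intake so that, with $c$ large, $f$ becomes strictly increasing for $x<B$ and strictly decreasing for $x>B$. The maximum $2b$ is then attained if and only if $a(L)=B$, i.e.\ if and only if the \textsc{Partition} instance is a yes-instance. The remaining steps are verifying the case analysis and checking that no price outside $P_j$ helps, which \cref{thm:rdisc} already handles.
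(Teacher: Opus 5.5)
You are heading in the same direction as the paper: a reduction from \textsc{Partition} with two equal-budget buyers who value dataset $j$ at $a_j$ and $2a_j$, plus a dataset valued by only one buyer. But the proposal stops at the step that constitutes the proof, because the gadget is never specified. You correctly show that with these values alone the two kinks of $f$ cannot coincide at $x=B$. (Incidentally, $2cB-B=(c-1)B$ gives $c=0$, not $c=-1$.) After that you only say you \emph{expect} to repair this with a dataset valued $D$ by buyer~2 only and another dataset valued only by buyer~1. You never choose values or fix the budget, and you leave unverified the claim that $f$ becomes unimodal with peak $2b$ exactly at $x=B$.

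Whether that claim holds depends entirely on the unspecified parameters. For example, if $D\ge b$, then buyer~2 is saturated for every $x$ and buyer~1's payment is nondecreasing in $x$. The optimum (every $j\in[m]$ priced low) then has the same value whether or not the \textsc{Partition} instance is a yes-instance. So as written there is no reduction to check, and ``with $c$ large'' suggests the calibration was not worked out.

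The missing idea follows from your own kink computation: you only need to shift buyer~1's kink, not buyer~2's. Add one dataset valued $E$ by buyer~1 and $0$ by buyer~2, priced at $E$. The two buyers' intakes become $x+E$ and $2cB-(c-1)x$. Placing both kinks at $x=B$ forces $b=(c+1)B$ and $E=cB$. With $c=2$ this is exactly the paper's instance: budgets $3B$, and an extra dataset worth $2B$ to buyer~1 only.

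Revenue $6B$ is the trivial upper bound (sum of budgets). Reaching it requires $a(L)+2B\ge 3B$ from buyer~1 and $a(L)+2(2B-a(L))\ge 3B$ from buyer~2, i.e.\ $a(L)=B$. The only other price \cref{thm:rdisc} allows for the extra dataset is $0$, and that caps buyer~1 at $a(L)\le 2B<3B$. The buyer-2-only padding dataset is unnecessary.
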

\begin{proof}
We show a reduction from the partition problem, where one is given a set of positive numbers $\{a_1, \ldots, a_m\}$ with $\sum_{j=1}^m a_j = 2b$ and the goal is to determine whether the set can be partitioned in $(A, B)$ such that $\sum_{j\in A}a_j = \sum_{j\in B}a_i$.
We construct a data market instance with two buyers and $m+1$ datasets.
The two buyers have the same budget of $3b$.
For every $\D_j$, $j\in [m]$, the first buyer has a value of $\v_{1, j} = a_j$ while the second buyer has a value of $\v_{2, j} = 2a_j$.
For the last dataset $\D_{m+1}$, the first buyer has a value of $\v_{1, m+1} = 2b$ while the second buyer has a value of $\v_{2, m+1} = 0$.

\begin{table}[ht]
    \centering
    \begin{tabular}{c|cccc}
       $j$  & 1 & $\dots$ & $m$ & $m+1$ \\
       \hline
      $\v_{1, j}$ (with budget of $3b$)  & $a_1$  & $\dots$ & $a_m$ & $2b$\\
      $\v_{2, j}$ (with budget of $3b$)  & $2a_1$ & $\dots$ & $2a_m$ & $0$  \\
    \end{tabular}
\end{table}
When the partition instance is a YES instance, suppose the index set can be partitioned into $A$ and $B$ such that $\sum_{j\in A} a_j = \sum_{j\in B} a_j = b$.
We set the prices as follows: for every dataset $\D_j$ with index $j\in A$, set the price as $a_j$; for every dataset $\D_j$ with $j\in B$, set the price as $2a_j$; set the price of the last one as $2b$.
Then the revenue gained from the first buyer is equal to $\min(b_1, \sum_{j: \v_{1, j}\ge p_j} p_j) = \min(3b, \sum_{j\in A} p_j + 3b) = 3b$.
In addition, as $\v_{2, j} \ge p_j$ always holds for any $j\in [m]$, the revenue from the second buyer is equal to $b_2 = 3b$.
Therefore, the total revenue is equal to $3b+3b = 6b$.

When the partition instance is an NO instance, we now prove it is impossible to earn a revenue of $6b$.
Otherwise, assume there exists a price vector $\p$ that achieves revenue of $6b$.
According to \cref{thm:rdisc}, there is an optimal price vector such that $p_j \in \{a_j, 2a_j\}$ for any $j\in [m]$ and $p_{m+1} = 2b$.
Let $A$ be the set of indices where $p_j = a_j$ and $B$ be the set of indices where $p_j = 2a_j$.
We can observe that buyer 1 is only willing to buy datasets in $A$, and the last one, while buyer 2 is willing to buy any dataset in the first $m$ ones.
As the total revenue is equal to $6b$, we then have
\begin{align*}
3b = b_1 \le \sum_{j\in A} p_j + 2b= \sum_{j\in A} a_j + 2b,\quad
3b = b_2 \le \sum_{j\in [m]} p_j = \sum_{j\in A}a_j + \sum_{j\in B} 2a_j = 2b + \sum_{j\in B} a_j,
\end{align*}
leading to $\sum_{j\in A} a_j, \sum_{j\in B} \ge b$, which contradicts the fact that the partition instance is a NO instance.
\end{proof}

Next, we show that the problem has a constant inapproximability.
The hardness is by a reduction from the vertex cover problem on regular graphs.

\thmLinearAPX*
\begin{proof}
We present a reduction from the Minimum Vertex Cover problem, whose goal is to find the smallest possible set of vertices such that every edge in the graph is incident to at least one vertex in the set.
Due to \cite{dinur2007pcp} and the PCP theorem, the minimum vertex cover is hard to approximate within a constant factor, even in graphs with constant bounded degrees.
Formally,
\begin{lemma}[Theorem 2.10 of \cite{schoenebeck2020influence,dinur2007pcp}]
\label{thm:vertex_cover}
There exists a universal constant integer $d$ and a universal constant $\gamma \in (0, 1)$ such that,
given an integer $k = \frac{2}{3}n$ and an undirected graph $G=(V, E)$ where the degree of each vertex is bounded by $d$, it is \NP-hard to distinguish the following two cases:
\begin{itemize}
    \item YES: $G$ has a vertex cover of size $k$;
    \item NO: all vertex covers of $G$ has size at least $(1 + \gamma)k$.
\end{itemize}
\end{lemma}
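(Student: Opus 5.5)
This statement is the bounded-degree vertex cover gap result of \cite{dinur2007pcp}, in the normalized form given as Theorem~2.10 of \cite{schoenebeck2020influence}. The plan is to derive it from the standard gap version and then pad the graph so the threshold becomes exactly $k=\frac23 n$.

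\textbf{Starting point.} By the PCP theorem and the gap amplification of \cite{dinur2007pcp}, combined with the classical \APX-hardness of vertex cover on bounded-degree graphs (Papadimitriou--Yannakakis), there are constants $d_0$ and $\gamma_0>0$ with the following property. Given a graph $G_0=(V_0,E_0)$ of maximum degree $d_0$ and an integer $c$, it is \NP-hard to distinguish the case $\mathrm{VC}(G_0)\le c$ from the case $\mathrm{VC}(G_0)\ge(1+\gamma_0)c$. We may delete isolated vertices without changing the vertex cover number. Then $|E_0|\ge n_0/2$, and since each vertex covers at most $d_0$ edges, $c\ge \mathrm{VC}(G_0)\ge |E_0|/d_0\ge n_0/(2d_0)$ in the YES case. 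Hence $c=\Theta(n_0)$, and if $c<n_0/(2d_0)$ we may output a NO instance outright.

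\textbf{Padding to ratio $2/3$.} First replace $G_0$ by three disjoint copies, so that $n_0$ and $c$ are multiples of $3$ and all quantities below are integers. Then add disjoint gadgets whose vertex cover numbers are known exactly.
\begin{itemize}
\item If $c\ge \frac23 n_0$, add $s=\frac32 c-n_0$ isolated vertices. This gives $n=n_0+s$ and $c=\frac23 n$.
\item If $c<\frac23 n_0$, add $t=2n_0-3c$ disjoint copies of $K_4$, each of which has vertex cover number $3$. This gives $c+3t=\frac23(n_0+4t)$.
\end{itemize}
Vertex cover is additive over disjoint components, so the new threshold is $k=c+\Delta$ with $k=\frac23 n$, where $\Delta\in\{0,3t\}$ is the gadgets' contribution. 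The maximum degree stays at most $d=\max(d_0,3)$.

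\textbf{Gap preservation.} In the YES case the padded graph has a vertex cover of size $k$. In the NO case every vertex cover has size at least $(1+\gamma_0)c+\Delta=k+\gamma_0 c$. Since $s,t=O(n_0)=O(c)$, we have $k=O(c)$, so $\gamma_0 c\ge\gamma k$ for some constant $\gamma>0$ depending only on $d_0$ and $\gamma_0$.

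\textbf{Main obstacle.} The substantive ingredient is the constant-gap hardness for bounded-degree vertex cover, which comes from the PCP theorem and cannot be reproved here; I would cite it. The padding arithmetic is routine. Its one delicate point is ensuring $c=\Omega(n_0)$, so that the additive padding dilutes the gap by only a constant factor, and this is why isolated vertices are removed first.
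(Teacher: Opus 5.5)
Your argument is correct in outline. Note, though, that the paper gives no proof here: it imports the statement as Theorem~2.10 of \cite{schoenebeck2020influence}, resting on \cite{dinur2007pcp}.

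The normalization $k=\frac23 n$ is what the standard triangle-per-clause reduction produces directly, starting from gap-E3SAT with bounded variable occurrences. That problem is \NP-hard by the PCP theorem together with the standard occurrence reduction. The reduction builds one triangle per clause and joins complementary literal occurrences. With $m$ clauses one gets $n=3m$, and every cover uses at least two vertices of each triangle. A satisfying assignment therefore yields a cover of size exactly $2m=\frac23 n$. If at most $(1-\varepsilon)m$ clauses are simultaneously satisfiable, then every independent set has size at most $(1-\varepsilon)m$, so every cover has size at least $(1+\varepsilon/2)\frac23 n$. Bounded occurrences give bounded degree. This route needs no padding and no lower bound on the threshold.

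You instead start from an arbitrary-threshold gap for bounded-degree vertex cover and normalize by disjoint padding. What this buys is the observation that the constant $\frac23$ is immaterial: any target ratio can be reached with isolated vertices or small cliques. The price is that $\gamma$ shrinks by a factor $O(d_0)$, which is exactly why you need the bound $c\ge n_0/(2d_0)$.

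Two small repairs are needed.
\begin{enumerate}
\item Three copies do not make $s=\frac32 c-n_0$ an integer; that requires $c$ to be even. Use two copies in the first case. The second case needs no copies, since $t=2n_0-3c$ is an integer and $n=9n_0-12c$ is divisible by $3$.
\item If $c\ge n_0$, output a fixed YES instance. Otherwise the number of isolated vertices added in the first case need not be polynomial in the input size.
\end{enumerate}
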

Given a vertex cover instance $G=(V,E)$ constructed by \cref{thm:vertex_cover}, we now construct a data market instance as follows: there are $n$ datasets, where $n$ is the number of vertices of the given graph.
Each dataset corresponds to one vertex $v$ of the input graph.
There are two types of buyers in the data market: $m$ \emph{edge buyers} and $t$ \emph{normal buyers}.
Each edge buyer $e$ corresponds to an edge of the input graph.
Each edge buyer $e=(u, v)$ has value $B$ for each of the two datasets $u$ and $v$
and value $0$ for any other dataset.
Each normal buyer has the same value of $1$ for every dataset.
We set $B$ and $t$ sufficiently large such that $\eps\cdot t > m +n$ and $B > (1+\eps)\cdot t$.
Each edge buyer has a budget of $B$, and each normal buyer has a budget of $n$.
\begin{table}[ht]
    \centering
    \begin{tabular}{l|c|ccccccc}
       &  & 1 & $\dots$ & $u$ & $\dots$ &$v$ & $\dots$ & $n$  \\
       \hline
     Normal buyers & $\v_{n_1, j}$ (with budget of $n$)  & $1$  &  & & $\dots$ &  &  & $1$\\
     & $\dots$ \\
     &$\v_{n_t, j}$ (with budget of $n$)  & $1$   &  & & $\dots$ &  &  & $1$\\
     Edge buyers &  $\v_{(u, v), j}$ (with budget of $B$)  & 0  & $\dots$ & $B$ & $0$ & $B$ & $0$ & $0$\\
    \end{tabular}
\end{table}

To establish the gap in optimal revenue between the yes and no instances, we first present the following lemma, which characterizes the optimal prices.
\begin{lemma}
Suppose $\p^*$ is the optimal price vector.
Then there are exactly $k$ among the prices equal to $B$,
with the remaining $n-k$ prices equal to $1$, where $k$ is the size of the minimum vertex cover.
\end{lemma}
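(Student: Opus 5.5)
By \cref{thm:rdisc}, I will assume \wLoG{} that every price of $\p^*$ lies in $P_j=\{0,1,B\}$: the normal buyers value every dataset at $1$, and edge buyers value it at $0$ or $B$. The plan has three steps. First, rule out the price $0$. Then write the revenue as a closed-form function of the set $C\subseteq V$ of datasets priced $B$. Finally, optimize that function over $C$.

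\textbf{Ruling out price $0$.} Raising a price from $0$ to $1$ weakly increases every term $p_j\boolOne(\v_{i,j}\ge p_j)$, because every buyer with positive value has value at least $1$. By \cref{eq:ri}, no $r_i$ decreases. So \wLoG{} every price is in $\{1,B\}$.

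\textbf{Revenue as a function of $C$.} Let $C=\{j: p_j=B\}$.
\begin{itemize}
\item A normal buyer values each dataset at $1<B$, so she buys exactly the $n-|C|$ datasets priced $1$. She pays $\min(n,n-|C|)=n-|C|$.
\item An edge buyer $e=(u,v)$ values only $u$ and $v$, each at $B$, so she is willing to buy both at either price.
\begin{itemize}
\item If $\{u,v\}\cap C\neq\emptyset$, her demand costs at least $B$, so she pays exactly her budget $B$.
\item Otherwise both prices are $1$, and she pays $2$.
\end{itemize}
\end{itemize}
Writing $\mathrm{cov}(C)$ for the number of edges covered by $C$, the revenue is therefore
\[ r(C)=t(n-|C|)+B\cdot \mathrm{cov}(C)+2\,(m-\mathrm{cov}(C)). \]

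\textbf{Optimizing over $C$.} I first show that an optimal $C$ must be a vertex cover. Suppose some edge $(u,v)$ is uncovered, and add $u$ to $C$.
\begin{itemize}
\item The normal buyers lose exactly $t$ in total.
\item Edge $(u,v)$ becomes covered, which gains $B-2$. No covered edge becomes uncovered, so the change in revenue is at least $B-2-t$.
\end{itemize}
Since $B>(1+\eps)t$ and $\eps t>m+n\ge 2$, we get $B-2-t>\eps t-2>0$, a strict improvement. Hence every optimum has $C$ a vertex cover. On vertex covers, $r(C)=t(n-|C|)+Bm$, which is strictly decreasing in $|C|$. So the optimum picks a minimum vertex cover: exactly $k$ prices equal $B$ and the remaining $n-k$ equal $1$, which is the statement.

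\textbf{Main obstacle.} The delicate point is the exchange step. It must be a strict improvement so that \emph{every} optimum, not just some optimum, has the claimed form, and it uses the parameter choices $B>(1+\eps)t$ and $\eps t>m+n$.

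There is also a subtlety in the reduction to $P_j$. \Cref{thm:rdisc} only says that some optimum lies in $\prod_j P_j$, so the conclusion is really about the discretized optimum. I would state the lemma in that sense, noting that \cref{thm:rdisc} lets us take $\p^*$ of this form. Alternatively, one can check directly that a price strictly between $1$ and $B$ (or above $B$) is strictly dominated: it loses all normal buyers on that dataset without gaining more than raising the price to $B$ does. Either way, only the revenue value $t(n-k)+Bm$ is needed for the YES/NO gap that follows.
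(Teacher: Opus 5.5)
Your main argument is correct and follows the paper's proof. Both restrict to prices in $\{1,B\}$ via \cref{thm:rdisc}, show by raising an endpoint of an uncovered edge to $B$ that the $B$-priced set must be a vertex cover, and conclude from the revenue $mB+t(n-|C|)$ on vertex covers that $|C|=k$. Your exact accounting (an uncovered edge buyer pays $2$, so the gain is at least $B-2-t$) is cleaner than the paper's bound, and you also handle price $0$, which the paper ignores.

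Your caveat that the lemma only holds for optima in $\prod_j P_j$ is right, but drop the proposed alternative. Prices in $(1,B)$ are only weakly dominated, not strictly. For a single edge $(u,v)$, the prices $p_u=B-\tfrac12$, $p_v=1$ still collect $\min(B,B+\tfrac12)=B$ from the edge buyer and $1$ from each normal buyer. That matches the revenue $B+t$ of $(B,1)$. So there are optimal price vectors with no price equal to $B$, and only the discretized statement is true; fortunately, that is all the YES/NO gap argument needs.
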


\begin{proof}
According to \cref{thm:rdisc}, there exists a unique optimal price vector $\bm{p}^*$ with $p_i^*\in \{1, B\}$.
We first construct prices according to the minimum vertex cover, denoted by $S$.
Let $k$ be the size of $S$.
We set $p_i^* = B$ if $i \in S$ and $b$ otherwise.
In that case, every edge buyer is able to fully consume her budget since there exists one of its endpoints whose price is set as $B$.
Hence, the total revenue from edge buyers is equal to $m\cdot B$.
Meanwhile, each normal buyer is only willing to buy the items for which the price is $1$.
Therefore, the total revenue from the normal buyers is equal to $t\cdot (n-k)$.
Combining it up, we have the total revenue under this price vector is equal to $m\cdot B + t\cdot (n-k)$.

Next, we prove that the revenue is suboptimal when the set of datasets with price $B$ does not form a minimum vertex cover.
If not, assume $\p$ is also an optimal price vector while $V^B= \{v: p_v = B\}$ is not a minimum vertex cover of $G$.
First, $V^B$ must be a vertex cover of $G$.
Otherwise, assume edge $e=(u, v)$ is not covered.
By changing the price vector into $\hat{\bm{p}}$ with $\hat{p_u} = B$ and $\hat{p}_w = p_w$ for any other $w\in V$, the increase in revenue will be at least
\[
B - (m-1)\cdot b - n - t  =  B - (m+n+t-1) >0,
\]
which violates the optimality of $\p$.
Then, since $V^B$ is not the minimum, then  $\abs{V^B} > k$ by the optimality of $S$.
Hence, the maximum revenue is at most $m\cdot B + t\cdot (n-k-1)$, which is less than the revenue achieved by $\p^B$ and violates the optimality of $\p$.
\end{proof}
Let $k^Y$ and $k^N$ respectively be the size of the minimum vertex cover and $\p^{\sf YES}$ and $\p^{\sf NO}$ be the corresponding optimal price vector of the YES and NO instance of \cref{thm:vertex_cover}.
Hence, $k^Y = \frac{2}3n$ and $k^N > \frac{2}3n\cdot (1+\gamma)$.
When the input vertex cover instance is YES, the maximum revenue is
\[
r(\p^Y) = m\cdot B + t\cdot (n - k^Y) = m\cdot B + t\cdot \frac{1}{3}n\,.
\]
On the other hand, when the input vertex cover instance is a NO instance, then the maximum revenue is
\[
r(\p^N) = m\cdot B + t\cdot (n - k^N) < m\cdot B + t\cdot \frac{1-2\gamma}{3} \cdot n
\]
Therefore, the inapproximability ratio is at least
\begin{align*}
\frac{r(\p^Y)}{r(\p^N)} & \ge \frac{m\cdot B + t\cdot \frac{1}{3}n}{m\cdot B + t\cdot \frac{1-2\gamma}{3} \cdot n} \ge \frac{\frac{d}{2}n\cdot B + t\cdot \frac{1}{3}n}{\frac{d}{2}n\cdot B + t\cdot \frac{1-2\gamma}{3} \cdot n}
= \frac{\frac{d}{2}\cdot B + t\cdot \frac{1}{3}}{\frac{d}{2}\cdot B + t\cdot \frac{1-2\gamma}{3}} \\
& = \frac{\frac{d}{2}\cdot t\cdot (1+\eps) + t\cdot \frac{1}{3}}{\frac{d}{2}\cdot t\cdot (1+\eps) + t\cdot \frac{1-2\gamma}{3}}
= 1 + \frac{4\gamma}{3d+ 2-4\gamma} - \eps',
\end{align*}
where $\eps' = O(\eps)$.
Therefore, for any constant $\eps > 0$, the problem of finding optimal prices has a constant inapproximability ratio of $1 + \frac{4\gamma}{3d+ 2-4\gamma} - \eps$, where $\gamma$ and $d$ are universe constants provided by \cref{thm:vertex_cover}.
As a corollary, finding the prices is \APX-hard.
\end{proof}

\section{Supply-Side Market Clearing}
\label{sec:market-clear}

In addition to maximizing revenue,
we would also like to get \emph{supply-side market clearance},
i.e., every dataset is completely allocated to some buyer.
This ensures that no data goes unutilized.
In this section, we show that market clearance is compatible with revenue maximization.

We start by proving a sufficient condition for a price vector to allow market clearance.

For a price vector $\pvec$ and buyer $i$, define her \emph{desire} to be the amount of money
she needs to buy her favorite bundle. Equivalently, $d_i(\pvec)$ is the
maximum revenue we could have obtained from buyer $i$ if her budget was infinite.
Using \cref{thm:rdisc}, we get that
\[ d_i(\pvec) = \sum_{j \in [m]: \v_{i,j} \ge p_j} p_j. \]
Hence, $r_i(\pvec) = \min(b_i, d_i(\pvec))$.
A buyer is said to be \emph{satisfied} by $\pvec$ if $d_i(\pvec) \le b_i$,
i.e., she is not limited by her budget.

\begin{lemma}
\label{thm:clearable-implies-mc}
Call a price vector $\pvec$ \emph{clearable} if for every dataset $j \in [m]$,
either $p_j = 0$, or there exists a satisfied buyer $i \in [n]$ such that $\v_{i,j} \ge p_j$.
For any clearable price vector $\pvec$, there exists
a market-clearing allocation $\xvec \in [0, 1]^{n \times m}$, i.e.,
for every dataset $j \in [m]$, there exists a buyer $i \in [n]$ such that $x_{i,j} = 1$.
\end{lemma}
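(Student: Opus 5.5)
We construct the allocation buyer by buyer, and verify that every buyer's bundle is optimal and that every dataset is fully taken by someone. The allocation must be a demand allocation: each $\xvec_i \in \Dem_i(\pvec)$, and in the supply-aware sense each $x_{i,j}\in[0,1]$. The point is to show that, for a clearable $\pvec$, each buyer has \emph{some} optimal bundle such that, collectively, each dataset is fully bought by at least one buyer.

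\textbf{Satisfied buyers.} Fix a satisfied buyer $i$, so $d_i(\pvec)\le b_i$. Set $x_{i,j}=1$ for every $j$ with $\v_{i,j}\ge p_j$ and $x_{i,j}=0$ otherwise. This costs $d_i(\pvec)\le b_i$, so it is affordable. It is optimal because the buyer faces a fractional knapsack whose items have profit $\v_{i,j}-p_j$. Every item with nonnegative profit is taken in full, and adding any item with negative profit lowers utility. Hence this bundle maximizes $u_i$ even ignoring the budget, and in particular it lies in $\Dem_i(\pvec)$. (Datasets with $\v_{i,j}=p_j$ contribute zero profit, so including them fully is still optimal.)

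\textbf{Unsatisfied buyers.} For an unsatisfied buyer, take any bundle in $\Dem_i(\pvec)$, for instance the greedy bang-per-buck solution. Such a bundle exists because the feasible set is compact and $u_i$ is continuous.

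\textbf{Clearing.} Consider a dataset $j$. If $p_j>0$, clearability provides a satisfied buyer $i$ with $\v_{i,j}\ge p_j$, and by the construction above $x_{i,j}=1$. If $p_j=0$, then any buyer with $\v_{i,j}\ge 0$ finds $j$ free with nonnegative profit, and all values are nonnegative. So we may set $x_{i,j}=1$ for some buyer $i$, say buyer 1, without changing that buyer's cost or lowering her utility. Her bundle stays optimal, and $\Dem_i$ already allows this since the cost is unchanged. To keep this clean, we add all zero-priced datasets fully to every buyer's bundle from the start; this keeps every bundle in $\Dem_i(\pvec)$. Then every dataset has some $x_{i,j}=1$.

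The main obstacle is the subtle point that an optimal bundle in $\Dem_i(\pvec)$ may be non-unique, so we must argue that the chosen full bundle is genuinely optimal. This includes handling ties $\v_{i,j}=p_j$ and zero prices, where including the item is optional. Both cases are settled by the observation that such items have zero marginal profit and, in the satisfied or free case, do not strain the budget.
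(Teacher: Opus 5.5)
Your proof is correct and follows the same route as the paper. For $p_j>0$, a satisfied buyer who weakly wants $j$ can afford her whole desired bundle and so takes $j$ fully; for $p_j=0$, the dataset can be handed to anyone at no cost. Your version is in fact more careful than the paper's terse argument: you fix one bundle per satisfied buyer, so the per-dataset choices are simultaneously consistent, and you explicitly justify optimality in the tie case $\tau_{i,j}=p_j$ and when adding zero-priced datasets.
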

\begin{proof}
Fix a dataset $j$. If $p_j = 0$, then every buyer can be given the entire dataset.
Now assume $p_j > 0$. Then there exists a satisfied buyer $i$ such that $\v_{i,j} \ge p_j$.
This buyer would like to buy dataset $j$, and since her budget does not limit her, she will buy it entirely.
\end{proof}

We now show that for every price vector, there exists
another clearable price vector having greater or equal revenue.
This would prove that market clearance is compatible with revenue maximization.

\begin{lemma}
\label{thm:clearabilize}
For any price vector $\pvechat$, we can compute
a clearable price vector $\pvec^*$ in $O(m^2n^3)$ time such that
for each buyer $i \in [n]$, we have $r_i(\pvec^*) \ge r_i(\pvechat)$.
\end{lemma}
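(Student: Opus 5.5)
The plan is a local-repair procedure that only \emph{lowers} prices. Each step makes one offending dataset clearable without decreasing any $r_i$, and a monotone potential bounds the number of steps.

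Start from $\pvec = \pvechat$. Call dataset $j$ \emph{bad} if $p_j > 0$ and every buyer $i$ with $\v_{i,j} \ge p_j$ is unsatisfied, i.e.\ $d_i(\pvec) > b_i$. Let $W_j = \{i : \v_{i,j} \ge p_j\}$ be the buyers who want $j$.

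\textbf{Repair step.} Pick any bad $j$.
\begin{itemize}
\item If $W_j = \emptyset$, nobody pays for $j$. Set $p_j = 0$; no revenue changes.
\item Otherwise let $\delta = \min_{i \in W_j} (d_i(\pvec) - b_i) > 0$ and set $p_j' = \max(0, p_j - \delta)$.
\end{itemize}
The effect on each buyer's revenue, using $r_i = \min(b_i, d_i)$ from \cref{eq:ri}:
\begin{itemize}
\item A buyer $i \in W_j$ still wants $j$, since its price fell. Her desire drops by at most $\delta \le d_i - b_i$, so $r_i$ stays $b_i$.
\item A buyer $i \notin W_j$ either still ignores $j$, so $d_i$ is unchanged, or now starts buying it, so $d_i$ increases. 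Either way $r_i$ weakly increases.
\end{itemize}
After the step, either $p_j = 0$, or the minimizing buyer $i^\star \in W_j$ has $d_{i^\star} = b_{i^\star}$, so she is satisfied and still wants $j$. Either way $j$ is no longer bad. All other prices are untouched.

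\textbf{Termination.} Consider the set of pairs $(i,k)$ with $\v_{i,k} \ge p_k$. Since prices only decrease, this set never shrinks. So it can strictly grow at most $nm$ times.
\begin{itemize}
\item Suppose a step adds no new pair. Then no buyer's desire increases; desires only decrease, by $p_j - p_j'$ for buyers in $W_j$. Hence every satisfied buyer stays satisfied and keeps wanting the same datasets. Every previously non-bad dataset stays non-bad, and $j$ becomes non-bad, so the number of bad datasets strictly drops.
\item Consequently, between two pair-adding steps there are at most $m$ steps.
\end{itemize}
The total number of steps is therefore $O(nm \cdot m) = O(nm^2)$. Each step means finding a bad dataset and updating desires, which naively costs $O(nm)$ per step, or $O(n^2 m)$ if we recompute all desires. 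This is within the claimed $O(m^2 n^3)$. At termination no dataset is bad, so $\pvec^*$ is clearable by definition. Every $r_i$ was weakly increased throughout, giving $r_i(\pvec^*) \ge r_i(\pvechat)$.

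\textbf{Main obstacle.} The delicate point is termination. Lowering $p_j$ can attract new buyers to $j$, which raises their desire and may make a previously satisfied witness of some other dataset $k$ unsatisfied, so $k$ becomes bad again. The potential above is designed to handle exactly this. A ``bad'' event can only be re-created by adding a new wanting pair, and such pairs are monotone and bounded in number. I expect the main care to go into verifying that in non-pair-adding steps no satisfied buyer's desire rises. That holds because only buyers in $W_j$ see a desire change, and it is a decrease. Note that the resulting prices need not lie in the sets $P_j$ of \cref{thm:rdisc}, but the lemma does not require this.
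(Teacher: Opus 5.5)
Your repair step is the paper's step. Your $\delta=\min_{i\in W_j}(d_i-b_i)$ is exactly $p_j-\beta$ in the paper's proof, and the case $W_j=\emptyset$ corresponds to an empty maximum. Three of your arguments are correct: no $r_i$ decreases, pairs $(i,k)$ with $\tau_{i,k}\ge p_k$ never disappear, and the process terminates at a clearable vector.

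The gap is in the last step, the running time, which is part of the statement. Your lexicographic potential (number of pairs, then number of bad datasets) bounds the number of steps only by $O(nm\cdot m)=O(nm^2)$. With the implementation you describe, each step costs $\Theta(nm)$; recomputing all desires is $O(nm)$, not $O(n^2m)$. That gives $O(n^2m^3)$ in total. Since $n^2m^3\le m^2n^3$ only when $m\le n$, your sentence ``this is within the claimed $O(m^2n^3)$'' is false whenever $m>n$. The cause is a secondary potential of range $m$ (bad datasets) where one of range $O(n)$ is needed.

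The fix is to count constrained buyers instead, as the paper does. Its potential is $\phi=(n+1)\phi_1+\phi_2$. Here $\phi_1$ counts positively priced datasets plus pairs with $\tau_{i,j}<p_j$, and $\phi_2=|C|$ is the number of buyers with $d_i>b_i$.

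In your terms, the bound works like this:
\begin{itemize}
\item A step that sets $p_j'=0$ can occur at most $m$ times overall, because prices never rise.
\item In a step that adds no pair, desires only fall, so $|C|$ cannot increase.
\item If such a step also has $p_j'>0$, then $i^\star$ ends with desire exactly $b_{i^\star}$ and leaves $C$, so $|C|$ strictly drops.
\end{itemize}
Hence within any maximal run of non-pair-adding steps, at most $n$ steps have $p_j'>0$. There are at most $nm+1$ such runs. This gives $O(n\cdot nm+m)=O(n^2m)$ steps, and with $O(nm)$ work per step, $O(n^3m^2)$ time, as claimed.
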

\begin{proof}
We will show that any unclearable price vector $\pvec$ can be modified so that
a potential function of finite range decreases,
$r_i(\pvec)$ doesn't decrease for any buyer $i$,
and $p_j$ doesn't increase for any dataset $j$.
By iteratively applying this modification starting from $\pvechat$,
we will eventually find the required price vector $\pvec^*$.

For any price vector $\pvec$, define
\begin{align*}
\phi_1(\pvec) &\defeq \sum_{j=1}^m \left(\boolOne(p_j > 0) + \sum_{i=1}^n \boolOne(\v_{i,j} < p_j)\right),
\\ \phi_2(\pvec) &\defeq \sum_{i=1}^n \boolOne(d_i(\pvec) > b_i).
\end{align*}
Let $\phi(\pvec) \defeq (n+1)\phi_1(\pvec) + \phi_2(\pvec)$ be our potential function.
Then $0 \le \phi_1(\pvec) \le m(n+1)$ and $0 \le \phi_2(\pvec) \le n$,
so $0 \le \phi(\pvec) < (m+1)(n+1)^2$.

Suppose $\pvec$ is unclearable.
Let $C \defeq \{i \in [n]: d_i(\pvec) > b_i\}$ be the \emph{constrained} (i.e., unsatisfied) buyers,
and for each dataset $j \in [m]$, let
$U_j \defeq \{i \in [n]: \v_{i,j} < p_j\}$ be the buyers \emph{uninterested} in $j$ at price $p_j$.
Since $\pvec$ is unclearable, there exists a dataset $j \in [m]$
such that $p_j > 0$ and $[n] = U_j \cup C$.

We will reduce the price of dataset $j$ such that
some buyer becomes satisfied or the price reduces to 0.
Formally, let $p'_k \defeq p_k$ for all $k \in [m] \setminus \{j\}$,
and let $p'_j \defeq \max(0, \beta)$, where
\[ \beta \defeq \max_{i \in C \setminus U_j} (p_j - d_i(\pvec) + b_i). \]
We can compute $\pvec'$ in $O(mn)$ time.
For all $i \in C$, we have $d_i(\pvec) > b_i$, so $\beta < p_j$.
Since $p_j > 0$, we get $0 \le p'_j < p_j$.

For all $i \in C \setminus U_j$, we have $d_i(\pvec') = d_i(\pvec) - p_j + p'_j$
and $\beta \ge p_j - d_i(\pvec) + b_i$. Add these two inequalities to get
$d_i(\pvec') + \beta \ge p'_j + b_i$, which implies $d_i(\pvec') \ge (p'_j - \beta) + b_i \ge b_i$.
Hence, for all $i \in C \setminus U_j$, we have $r_i(\pvec') = r_i(\pvec) = b_i$.
For all $i \in U_j$, we have $d_i(\pvec') \ge d_i(\pvec)$, so $r_i(\pvec') \ge r_i(\pvec)$.

If $\phi_1(\pvec') < \phi_1(\pvec)$, then $\phi(\pvec') < \phi(\pvec)$.
Now suppose $\phi_1(\pvec') = \phi_1(\pvec)$. Then $p'_j > 0$,
and $d_i(\pvec') = d_i(\pvec)$ for all $i \in U_j$.
Let $C' \defeq \{i \in [n]: d_i(\pvec') > b_i\}$.
We want to show that $|C'| < |C|$, which would imply $\phi_2(\pvec') < \phi_2(\pvec)$.

If $i \in C' \cap U_j$, then $i \in C \cap U_j$.
If $i \in C' \setminus U_j$, then $i \in C$, since $[n] = C \cup U_j$.
Hence, $C' \subseteq C$. Since $p'_j > 0$, we have $p_j = \beta$.
Hence, $p'_j = p_j - d_i(\pvec) + b_i$ for some $i \in C \setminus U_j$.
Then $d_i(\pvec') = d_i(\pvec) - p_j + p'_j = b_i$. Hence, $i \not\in C'$.
Hence, $|C'| < |C|$. Therefore, $\phi_2(\pvec') < \phi_2(\pvec)$,
which implies $\phi(\pvec') < \phi(\pvec)$.

By starting from unclearable prices $\pvec$, we obtained another price vector $\pvec'$
such that $\phi(\pvec') < \phi(\pvec)$, $r_i(\pvec') \ge r_i(\pvec)$ for all $i \in [n]$,
and $p'_j \le p_j$ for all $j \in [m]$.
By applying this process iteratively on $\phat$,
we eventually obtain clearable prices $p^*$ such that
$r_i(p^*) \ge r_i(\phat)$ for all $i \in [n]$ and $p^*_j \le \phat_j$ for all $j \in [m]$.

The total number of iterations is upper-bounded by
the number of different values the potential function $\phi$ can take,
and $\phi(\pvec)$ can take at most $(m+1)(n+1)^2$ different values.
Moreover, it takes $O(mn)$ time to check if a price vector is clearable,
and $O(mn)$ time to run a single iteration.
Hence, the total running time of the algorithm is $O(m^2n^3)$.
\end{proof}

\newpage
\appendix
\crefalias{section}{appendix}
\crefalias{subsection}{appendix}
\crefalias{subsubsection}{appendix}

\section{Form of Accuracy Function \texorpdfstring{$a_i(\cdot)$}{a\_i(.)}}
\label{app:form_of_accuracy}

For the purpose of our proofs we work under the assumption that $x_{ij}$ is integral for all $i,j$.
\begin{lemma}
    \label{lem:utility}
     We have $\bm a_i(\bm x_i) = \mathbb{E}[\mathrm{Pre}(\theta_i \mid S(\bm x_i))] - \mathrm{Pre}(\theta_i) =  \sum_j \tau_{i,j} x_{i,j}$.
\end{lemma}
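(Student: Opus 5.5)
The approach is the standard Gaussian conjugate-prior computation. Buyer $i$ starts with prior $\theta_i \sim \Ncal(0, \tau_i^{-1})$, so $\Pre(\theta_i) = \tau_i$. The bundle $\bm x_i$ (with integral entries) yields, for each dataset $j$, $x_{i,j}$ signals of the form $\theta_i + \eta$ with $\eta \sim \Ncal(0,\tau_{i,j}^{-1})$. I will assume all noise terms are mutually independent, within a dataset as well as across datasets, and independent of $\theta_i$. The overall plan is to compute the posterior of $\theta_i$ given all $\sum_j x_{i,j}$ signals in closed form, read off its precision, and observe that this precision is deterministic. The outer expectation then becomes trivial.

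\textbf{Step 1: write down the joint density.} By Bayes' rule, the posterior density is proportional to the prior times the likelihood:
\[
\exp\Big(-\tfrac{\tau_i}{2}\theta^2\Big)\prod_{j}\prod_{k=1}^{x_{i,j}}\exp\Big(-\tfrac{\tau_{i,j}}{2}(s^{(k)}_{i,j}-\theta)^2\Big),
\]
where $s^{(k)}_{i,j}$ is the $k$-th signal from dataset $j$.

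\textbf{Step 2: complete the square in $\theta$.} Collecting the quadratic term gives a coefficient of $-\tfrac12\big(\tau_i + \sum_j \tau_{i,j} x_{i,j}\big)$. So the posterior is Gaussian with precision $\tau_i + \sum_j \tau_{i,j}x_{i,j}$. Its mean is the precision-weighted average of the signals, but the mean is not needed. The same conclusion can be reached by induction on the number of signals, using the one-signal update (precisions add).

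\textbf{Step 3: take the expectation.} The posterior variance $\big(\tau_i + \sum_j\tau_{i,j}x_{i,j}\big)^{-1}$ does not depend on the realized signals. Hence $\E[\Pre(\theta_i \mid S(\bm x_i))] = \tau_i + \sum_j \tau_{i,j}x_{i,j}$, and subtracting $\Pre(\theta_i)=\tau_i$ gives the claim.

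\textbf{Main obstacle.} The only delicate point is the modelling assumption. Independence of the noise across individual records, and not merely across datasets, is essential. With correlated noise inside a dataset, repeated records would not add precision linearly. I will state this independence explicitly as part of the signal model before carrying out the computation.
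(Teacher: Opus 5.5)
Your proposal is correct and follows essentially the same route as the paper: Bayes' rule with a likelihood that factors over independently sampled records, completing the square to get posterior precision $\tau_i + \sum_j \tau_{i,j} x_{i,j}$, and noting that this precision does not depend on the realized signals, so the expectation is immediate. The paper also assumes that each record's signal is an independent sample, as you do. Your version additionally writes the Gaussian exponents with the correct negative sign.
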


\begin{proof}
 Let $s_{i,j}(1), s_{i,j}(2), \dots , s_{i,j}(x_{i,j})$ denote the $x_{i,j}$ signals associated with the $x_{i,j}$ data records sampled from $\D_j$. We first show that $\theta_i \mid S(\bm x_i) \sim N \Big (\frac{\sum_{j \in M}\sum_{\ell \in [x_{i,j}]}s_{i,j}(\ell) \tau_{i,j} + \tau_i\mu_i }{\sum_{j \in M} \tau_{i,j}x_{i,j} + \tau_i}, (\tau_i + \sum_{j \in M} \tau_{i,j}x_{i,j})^{-1} \Big)$.

 By Bayes' theorem, we have:
 \begin{align*}
    \Pr[\theta_i \mid S(\bm x_i)] = \frac{\Pr[S( \bm x_i)\mid \theta_i] \Pr[\theta_i]}{\Pr[S(\bm x_i)]}
 \end{align*}

Observe that $\Pr[\theta_i] \propto \exp \big(\frac{(\theta_i -\mu_i)^2}{2} \big)$. Further, since each $s_{i,j}(\ell)$ is an independent sample from the underlying distribution of $\D_j$, we have $\Pr[S(\bm x_i) \mid \theta_i] = \prod_{j \in M} \prod_{\ell \in [x_{i,j}]} \Pr[\eta_{i,j} = s_{i,j}(\ell) - \theta_i] = \prod_{j \in M} \prod_{\ell \in [x_{i,j}]} \exp(\tau_{i,j}(s_{i,j}(\ell)-\theta_i)^2/2)$. Now, observe that

\begin{align*}
    \Pr[\theta_i \mid S(\bm x_i)] &\propto {\Pr[S( \bm x_i)\mid \theta_i] \Pr[\theta_i]}\\
                                  &=\Big(\prod_{j \in M} \prod_{\ell \in [x_{i,j}]} \exp(\tau_{i,j}(s_{i,j}(\ell)-\theta_i)^2/2)\Big) \cdot \exp(\tau_i(\theta_i - \mu_i)^2/2)\\
                                  &=\exp \Big (\sum_{j \in M} \sum_{\ell \in [x_{i,j}] \tau_{i,j}} (s_{i,j}(\ell)-\theta_i)^2/2 + \tau_i(\theta_i - \mu_i)^2/2 \Big )\\
                                  &\propto \exp \Big( \Big( \theta^2_i(\tau_i + \sum_j \tau_{i,j}x_{i,j}) -2\theta_i \cdot \big( {\sum_{j \in M}\sum_{\ell \in [x_{i,j}]}s_{i,j}(\ell) \tau_{i,j} + \tau_i\mu_i } \big) \Big)/2\Big)\\
                                  &\propto \exp \Big( (\tau_i + \sum_j \tau_{i,j}x_{i,j}) \cdot \Big (\theta_i - \frac{\sum_{j \in M}\sum_{\ell \in [x_{i,j}]}s_{i,j}(\ell) \tau_{i,j} + \tau_i\mu_i }{\sum_{j \in M} \tau_{i,j}x_{i,j} + \tau_i} \Big)^2/2 \Big)
\end{align*}
implying that $\theta_i \mid S(\bm x_i) \sim N \Big (\frac{\sum_{j \in M}\sum_{\ell \in [x_{i,j}]}s_{i,j}(\ell) \tau_{i,j} + \tau_i\mu_i }{\sum_{j \in M} \tau_{i,j}x_{i,j} + \tau_i}, (\tau_i + \sum_{j \in M} \tau_{i,j}x_{i,j})^{-1} \Big)$. Therefore, we have $\mathrm{Var}(\theta_i \mid S(\bm x_i)) = (\tau_i + \sum_{j \in M}\tau_{i,j}x_{i,j})^{-1}$. Since, $\mathrm{Pre}(\theta_i \mid S(\bm x_i)) = \mathrm{Var}^{-1}(\theta_i \mid S(\bm x_i)) = \tau_i + \sum_j \tau_{i,j} x_{i,j} $ is non-stochastic, we have $\mathbb{E}[\mathrm{Pre}(\theta_i \mid S(\bm x_i)] =\mathrm{Var}^{-1}(\theta_i \mid S(\bm x_i)) = (\tau_i + \sum_{j \in M}\tau_{i,j} x_{i,j})$, implying that $a_i(\bm x_i) = \mathbb{E}[\mathrm{Pre}(\theta \mid S(\bm x_i)] - \mathrm{Pre}(\theta_i) =  \sum_j \tau_{i,j}x_{i,j}$. \qedhere
\end{proof}

\clearpage
\section{\texorpdfstring{$k$}{k}-submodular Maximization and Submodular Maximization subject to Partition Matroid}
\label{sec:k_sub_sub_partition}

Let $U =\{a, b\}$ be a ground set and $k=2$.
$f$ is a $k$-submodular functions defined on $(S_1, S_2)$ with $S_1, S_2\subseteq U$.
A possible transformation from $f$ is as follows:
Create $k$ copies $U^e = \{e^{(1)}, \dots, e^{(k)}\}$ for every element $e$ of $U$.
Let $\bar{U}$ be the set of copies.
Then we can extend the values of $f$ to a function $\hat{f} \colon 2^{\bar{U}} \to \mathbb{R}_{\ge 0}$ by setting $\hat{f}(S)$ as $f(S_1, \dots, S_k)$ with $S_i = \{e \in U: e^{(i)} \in S\}$ and define a partition matroid $\Ical = \{S: \abs{S\cap U^e} \le 1 \text{ for any }e \in U\}$.
However, an issue arises in the definition of $\hat{f}$: it can be undefined for subsets $S\subseteq 2^{\bar{U}}$ corresponding to non-disjoint $(S_1,\dots, S_k)$ since $f$ itself is only defined when the sets are disjoint.
If no further property beyond submodularity is required, we can still fix this issue by simply setting $\hat{f}(S)$ as zero if $S \notin \Ical$.
Unfortunately, such an extension is not always possible when monotonicity is also needed: there exists a monotone and $k$-submodular $f$ for which there exists no extension $\hat{f}$ satisfying both submodularity and monotonicity.

For example, \Cref{fig:value_of_f} defines the concrete values of $f$ and correspondingly we can define the values of $\hat{f}$ within the partition matroid, as discussed above.
It can be verified that $f$ is $2$-submodular.
\begin{figure}[ht]
\begin{minipage}[t]{0.49\textwidth}
\centering
   \begin{tabular}{c|cccc}
       \diagbox[height=1.8em]{$S_1$}{$S_2$} & $\emptyset$ & $\{a\}$ & $\{b\}$ & $\{a, b\}$\\
       \hline
       $\emptyset$  & 0 & 4 & 4 & 4\\
       $\{a\}$ & 1 & - & 4 & -\\
       $\{b\}$ & 1 & 5 & - & -\\
       $\{a, b\}$ & 1 & - & - & -\\
    \end{tabular}
\caption{The values of $f$}
\label{fig:value_of_f}
\end{minipage}
\hfill
\begin{minipage}[t]{0.49\textwidth}
   \begin{tabular}{c|ccc}
       \diagbox[width=6em,height=1.9em]{\scriptsize $S\cap U^a$}{\scriptsize $S\cap U^b$} & $\emptyset$ & $\{b^{(1)}\}$ & $\{b^{(2)}\}$ \\
       \hline
       $\emptyset$  & 0 & 1 & 4 \\
       $\{a^{(1)}\}$ & 1 & 1  & 4 \\
       $\{a^{(2)}\}$ & 4 & 5 & 4 \\
       $\{a^{(1)}, a^{(2)}\}$ & {\color{textBlue}\xmark} & {\color{textBlue} $5$} & {\color{textBlue} $4$} \\
    \end{tabular}
\caption{The values of $\hat{f}$}
\label{fig:value_hat_f}
\end{minipage}
\end{figure}

However, there does not exist a monotone and submodular function $\hat{f}$ defined over $2^{\bar{U}}$.
Otherwise, by submodularity and monotonicity of $\hat{f}$,
\begin{align*}
&4= \hat{f}(\{a^{(1)}, b^{(2)}\}) \le \hat{f}(\{a^{(1)}, a^{(2)}, b^{(2)}\}) \le \hat{f}(\{a^{(1)}, b^{(2)}\}) + \hat{f}(\{a^{(2)}, b^{(2)}\}) - \hat{f}(\{b^{(2)}\}) = 4,\\
&5= \hat{f}(\{a^{(2)}, b^{(1)}\})
\le \hat{f}(\{a^{(1)}, a^{(2)}, b^{(1)}\}) \le \hat{f}(\{a^{(1)}, b^{(1)}\}) + \hat{f}(\{a^{(2)}, b^{(1)}\}) - \hat{f}(\{b^{(1)}\}) = 5\,.
\end{align*}
On the one hand, by  monotonicity, we have $\hat{f}(\{a^{(1)}, a^{(2)}\}) \le \hat{f}(\{a^{(1)}, a^{(2)}, b^{(2)}\}) = 4$.
On the other hand, by submodularity, $\hat{f}(\{a^{(1)}, a^{(2)}\}) \ge \hat{f}(\{a^{(1)}, a^{(2)}, b^{(1)}\}) + \hat{f}(\{a^{(1)}\}) -\hat{f}(\{a^{(1)}, b^{(1)}\}) = 5 + 1 - 1 =5$, which contradicts to the previous requirement.
Therefore, the extension $\hat{f}$ does not exist.

\clearpage

\phantomsection
\addcontentsline{toc}{section}{References}

\newcommand{\etalchar}[1]{$^{#1}$}


\begin{thebibliography}{FMMO24}

\bibitem[ACGM25]{akrami2025theoretical}
Hannaneh Akrami, Bhaskar~Ray Chaudhury, Jugal Garg, and Aniket Murhekar.
\newblock On the theoretical foundations of data exchange economies.
\newblock In {\em ACM Conf.\ Economics and Computation (EC)}, pages 444--444,
  2025.
\newblock \href {https://doi.org/10.1145/3736252.3742566}
  {\path{doi:10.1145/3736252.3742566}}.

\bibitem[{Acu}22]{AcumenDM}
{Acumen Research}.
\newblock Big data market size: Global industry, share, analysis, trends and
  forecast 2022 - 2030, 2022.
\newblock URL:
  \url{https://www.acumenresearchandconsulting.com/big-data-market}.

\bibitem[AD54]{arrow1954existence}
Kenneth~J Arrow and Gerard Debreu.
\newblock Existence of an equilibrium for a competitive economy.
\newblock {\em Econometrica}, 22(3):265--290, 1954.
\newblock \href {https://doi.org/10.2307/1907353} {\path{doi:10.2307/1907353}}.

\bibitem[ADHR24]{AgarwalDHR20}
Anish Agarwal, Munther~A. Dahleh, Thibaut Horel, and Maryann Rui.
\newblock Towards data auctions with externalities.
\newblock {\em Games Econ. Behav.}, 148:323--356, 2024.
\newblock \href {https://doi.org/10.1016/j.geb.2024.09.008}
  {\path{doi:10.1016/j.geb.2024.09.008}}.

\bibitem[ADS19]{AgarwalDS19}
Anish Agarwal, Munther~A. Dahleh, and Tuhin Sarkar.
\newblock A marketplace for data: An algorithmic solution.
\newblock In {\em ACM Conf.\ Economics and Computation (EC)}, pages 701--726.
  {ACM}, 2019.
\newblock \href {https://doi.org/10.1145/3328526.3329589}
  {\path{doi:10.1145/3328526.3329589}}.

\bibitem[AFM{\etalchar{+}}23]{acemoglu2023good}
Daron Acemoglu, Alireza Fallah, Ali Makhdoumi, Azarakhsh Malekian, and Asuman
  Ozdaglar.
\newblock How good are privacy guarantees? platform architecture and violation
  of user privacy.
\newblock Technical report, National Bureau of Economic Research, 2023.

\bibitem[AGR11]{azar2011submodular}
Yossi Azar, Iftah Gamzu, and Ran Roth.
\newblock Submodular max-{SAT}.
\newblock In {\em European Symp.\ Algorithms (ESA)}, pages 323--334. Springer,
  2011.
\newblock \href {https://doi.org/10.1007/978-3-642-23719-5_28}
  {\path{doi:10.1007/978-3-642-23719-5_28}}.

\bibitem[AP86]{admati1986monopolistic}
Anat~R Admati and Paul Pfleiderer.
\newblock A monopolistic market for information.
\newblock {\em Journal of Economic Theory}, 39(2):400--438, 1986.
\newblock \href {https://doi.org/10.1016/0022-0531(86)90052-9}
  {\path{doi:10.1016/0022-0531(86)90052-9}}.

\bibitem[AP90]{admati1990direct}
Anat~R Admati and Paul Pfleiderer.
\newblock Direct and indirect sale of information.
\newblock {\em Econometrica: Journal of the Econometric Society}, pages
  901--928, 1990.
\newblock \href {https://doi.org/10.2307/2938355} {\path{doi:10.2307/2938355}}.

\bibitem[AS04]{ageev2004pipage}
Alexander~A Ageev and Maxim~I Sviridenko.
\newblock Pipage rounding: A new method of constructing algorithms with proven
  performance guarantee.
\newblock {\em Journal of Combinatorial Optimization}, 8(3):307--328, 2004.
\newblock \href {https://doi.org/10.1023/B:JOCO.0000038913.96607.c2}
  {\path{doi:10.1023/B:JOCO.0000038913.96607.c2}}.

\bibitem[BBG22]{bergemann2022economics}
Dirk Bergemann, Alessandro Bonatti, and Tan Gan.
\newblock The economics of social data.
\newblock {\em The RAND Journal of Economics}, 53(2):263--296, 2022.
\newblock \href {https://doi.org/10.1111/1756-2171.12407}
  {\path{doi:10.1111/1756-2171.12407}}.

\bibitem[BBS18]{bergemann2018design}
Dirk Bergemann, Alessandro Bonatti, and Alex Smolin.
\newblock The design and price of information.
\newblock {\em American economic review}, 108(1):1--48, 2018.
\newblock \href {https://doi.org/10.1257/aer.20161079}
  {\path{doi:10.1257/aer.20161079}}.

\bibitem[BGH19]{BeiGH19}
Xiaohui Bei, Jugal Garg, and Martin Hoefer.
\newblock Ascending-price algorithms for unknown markets.
\newblock {\em ACM Trans.\ Algorithms}, 2019.
\newblock \href {https://doi.org/10.1145/3319394} {\path{doi:10.1145/3319394}}.

\bibitem[BGI{\etalchar{+}}24]{BhaskaraGIKMS24}
Aditya Bhaskara, Sreenivas Gollapudi, Sungjin Im, Kostas Kollias, Kamesh
  Munagala, and Govind~S. Sankar.
\newblock Data exchange markets via utility balancing.
\newblock In {\em World Wide Web Conf.\ (WWW)}, pages 57--65. {ACM}, 2024.
\newblock \href {https://doi.org/10.1145/3589334.3645364}
  {\path{doi:10.1145/3589334.3645364}}.

\bibitem[BHS19]{blum2019computing}
A.~Blum, N.~Haghtalab, and S.~Seddighin.
\newblock Computing stackelberg equilibria of large general-sum games.
\newblock In {\em Algorithmic Game Theory (SAGT)}, pages 103--114. Springer,
  2019.

\bibitem[BK20]{bhaskar2018partial}
Umang Bhaskar and Gunjan Kumar.
\newblock Partial function extension with applications to learning and property
  testing.
\newblock In {\em Intl.\ Symp.\ Algorithms and Computation (ISAAC)}, volume
  181, 2020.
\newblock \href {https://doi.org/10.4230/LIPIcs.ISAAC.2020.46}
  {\path{doi:10.4230/LIPIcs.ISAAC.2020.46}}.

\bibitem[BKL12]{BabaioffKP12}
Moshe Babaioff, Robert Kleinberg, and Renato~Paes Leme.
\newblock Optimal mechanisms for selling information.
\newblock In {\em ACM Conf.\ Electronic Commerce (EC)}, page 92–109, 2012.
\newblock \href {https://doi.org/10.1145/2229012.2229024}
  {\path{doi:10.1145/2229012.2229024}}.

\bibitem[BV25]{baley2025data}
Isaac Baley and Laura~L. Veldkamp.
\newblock {\em The Data Economy: Tools and Applications}.
\newblock Princeton University Press, 2025.
\newblock \href {https://doi.org/10.1515/9780691256740}
  {\path{doi:10.1515/9780691256740}}.

\bibitem[CCPV11]{calinescu2011maximizing}
Gruia Calinescu, Chandra Chekuri, Martin Pal, and Jan Vondr{\'a}k.
\newblock Maximizing a monotone submodular function subject to a matroid
  constraint.
\newblock {\em SIAM Journal on Computing}, 40(6):1740--1766, 2011.
\newblock \href {https://doi.org/10.1137/080733991}
  {\path{doi:10.1137/080733991}}.

\bibitem[CDDT09]{ChenDDT09}
Xi~Chen, Decheng Dai, Ye~Du, and Shang{-}Hua Teng.
\newblock Settling the complexity of {A}rrow-{D}ebreu equilibria in markets
  with additively separable utilities.
\newblock In {\em Symp.\ Foundations of Computer Science (FOCS)}, pages
  273--282, 2009.
\newblock \href {https://doi.org/10.1109/FOCS.2009.29}
  {\path{doi:10.1109/FOCS.2009.29}}.

\bibitem[CDG{\etalchar{+}}17]{DGJMVY17}
Richard Cole, Nikhil~R. Devanur, Vasilis Gkatzelis, Kamal Jain, Tung Mai,
  Vijay~V. Vazirani, and Sadra Yazdanbod.
\newblock Convex program duality, {F}isher markets, and {N}ash social welfare.
\newblock In {\em ACM Conf.\ Economics and Computation (EC)}, pages 459--460.
  {ACM}, 2017.
\newblock \href {https://doi.org/10.1145/3033274.3085109}
  {\path{doi:10.1145/3033274.3085109}}.

\bibitem[CEP{\etalchar{+}}23]{cummings2023optimal}
Rachel Cummings, Hadi Elzayn, Emmanouil Pountourakis, Vasilis Gkatzelis, and
  Juba Ziani.
\newblock Optimal data acquisition with privacy-aware agents.
\newblock In {\em IEEE Conference on Secure and Trustworthy Machine Learning
  (SaTML)}, pages 210--224, 2023.
\newblock \href {https://doi.org/10.1109/SaTML54575.2023.00023}
  {\path{doi:10.1109/SaTML54575.2023.00023}}.

\bibitem[CGMS26]{ChaudhuryGMS26}
Bhaskar~Ray Chaudhury, Jugal Garg, Aniket Murhekar, and Jiaxin Song.
\newblock Data pricing via competitive equilibrium.
\newblock In {\em Proceedings of the {ACM} on Web Conference (WWW)}, 2026.

\bibitem[CIL{\etalchar{+}}18]{chen2018optimal}
Yiling Chen, Nicole Immorlica, Brendan Lucier, Vasilis Syrgkanis, and Juba
  Ziani.
\newblock Optimal data acquisition for statistical estimation.
\newblock In {\em ACM Conf.\ Economics and Computation (EC)}, pages 27--44,
  2018.

\bibitem[CK04]{chekuri2004maximum}
Chandra Chekuri and Amit Kumar.
\newblock Maximum coverage problem with group budget constraints and
  applications.
\newblock In {\em International Workshop on Randomization and Approximation
  Techniques in Computer Science}, pages 72--83. Springer, 2004.
\newblock \href {https://doi.org/doi.org/10.1007/978-3-540-27821-4_7}
  {\path{doi:doi.org/10.1007/978-3-540-27821-4_7}}.

\bibitem[CKK24]{ChenKK24}
Xi~Chen, Christian Kroer, and Rachitesh Kumar.
\newblock The complexity of pacing for second-price auctions.
\newblock {\em Mathematics of Operations Research}, 49(4):2109--2135, 2024.
\newblock \href {https://doi.org/10.1287/moor.2022.0009}
  {\path{doi:10.1287/moor.2022.0009}}.

\bibitem[CKP{\etalchar{+}}22]{ConitzerKPSSSW22}
Vincent Conitzer, Christian Kroer, Debmalya Panigrahi, Okke Schrijvers, Eric
  Sodomka, Nicolas~E. Stier-Moses, and Chris Wilkens.
\newblock Pacing equilibrium in first-price auction markets.
\newblock {\em Management Science}, 68(12):8515--8535, 2022.
\newblock \href {https://doi.org/10.1287/mnsc.2022.4310}
  {\path{doi:10.1287/mnsc.2022.4310}}.

\bibitem[CLR{\etalchar{+}}15]{cummings2015accuracy}
Rachel Cummings, Katrina Ligett, Aaron Roth, Zhiwei~Steven Wu, and Juba Ziani.
\newblock Accuracy for sale: Aggregating data with a variance constraint.
\newblock In {\em Symp.\ Innovations in Theoret.\ Computer Science (ITCS)},
  pages 317--324, 2015.
\newblock \href {https://doi.org/10.1145/2688073.2688106}
  {\path{doi:10.1145/2688073.2688106}}.

\bibitem[CPV05]{codenotti2005polynomial}
Bruno Codenotti, Sriram Pemmaraju, and Kasturi Varadarajan.
\newblock On the polynomial time computation of equilibria for certain exchange
  economies.
\newblock In {\em Symp.\ Discrete Algorithms (SODA)}, pages 72--81, 2005.

\bibitem[CPY17]{chen2017complexity}
Xi~Chen, Dimitris Paparas, and Mihalis Yannakakis.
\newblock The complexity of non-monotone markets.
\newblock {\em Journal of the ACM (JACM)}, 64(3):1--56, 2017.
\newblock \href {https://doi.org/10.1145/3064810} {\path{doi:10.1145/3064810}}.

\bibitem[CS06]{conitzer2006computing}
Vincent Conitzer and Tuomas Sandholm.
\newblock Computing the optimal strategy to commit to.
\newblock In {\em ACM Conf.\ Electronic Commerce (EC)}, pages 82--90, 2006.
\newblock \href {https://doi.org/10.1145/1134707.1134717}
  {\path{doi:10.1145/1134707.1134717}}.

\bibitem[CSVY06]{codenotti2006leontief}
Bruno Codenotti, Amin Saberi, Kasturi Varadarajan, and Yinyu Ye.
\newblock Leontief economies encode nonzero sum two-player games.
\newblock In {\em Symp.\ Discrete Algorithms (SODA)}, volume~6, pages 659--667,
  2006.
\newblock URL: \url{https://dl.acm.org/doi/10.5555/1109557.1109629}.

\bibitem[CT05]{cover2005entropy}
Thomas~M. Cover and Joy~A. Thomas.
\newblock Entropy, relative entropy, and mutual information.
\newblock In {\em Elements of Information Theory}, pages 13--55. Wiley, 2005.
\newblock \href {https://doi.org/10.1002/047174882X.ch2}
  {\path{doi:10.1002/047174882X.ch2}}.

\bibitem[CT09]{ChenT09}
Xi~Chen and Shang{-}Hua Teng.
\newblock Spending is not easier than trading: {O}n the computational
  equivalence of {F}isher and {A}rrow-{D}ebreu equilibria.
\newblock In {\em Intl.\ Symp.\ Algorithms and Computation (ISAAC)}, pages
  647--656, 2009.
\newblock \href {https://doi.org/10.1007/978-3-642-10631-6_66}
  {\path{doi:10.1007/978-3-642-10631-6_66}}.

\bibitem[CV21]{cai2020sell}
Yang Cai and Grigoris Velegkas.
\newblock How to sell information optimally: An algorithmic study.
\newblock In {\em Symp.\ Innovations in Theoret.\ Computer Science (ITCS)},
  2021.
\newblock \href {https://doi.org/10.4230/LIPIcs.ITCS.2021.81}
  {\path{doi:10.4230/LIPIcs.ITCS.2021.81}}.

\bibitem[DFHM24]{DeligkasFHM24}
Argyrios Deligkas, John Fearnley, Alexandros Hollender, and Themistoklis
  Melissourgos.
\newblock Constant inapproximability for {F}isher markets.
\newblock In {\em ACM Conf.\ Economics and Computation (EC)}, 2024.
\newblock \href {https://doi.org/10.1145/3670865.3673533}
  {\path{doi:10.1145/3670865.3673533}}.

\bibitem[Din07]{dinur2007pcp}
Irit Dinur.
\newblock The {PCP} theorem by gap amplification.
\newblock {\em Journal of the ACM}, 54:12--es, 2007.
\newblock \href {https://doi.org/10.1145/1236457.1236459}
  {\path{doi:10.1145/1236457.1236459}}.

\bibitem[DPSV08]{DevanurPSV08}
Nikhil Devanur, Christos Papadimitriou, Amin Saberi, and Vijay Vazirani.
\newblock Market equilibrium via a primal--dual algorithm for a convex program.
\newblock {\em J. ACM}, 55(5), 2008.
\newblock \href {https://doi.org/10.1145/1411509.1411512}
  {\path{doi:10.1145/1411509.1411512}}.

\bibitem[DV12]{dobzinski2012query}
Shahar Dobzinski and Jan Vondr{\'a}k.
\newblock From query complexity to computational complexity.
\newblock In {\em Symp.\ Theory of Computing (STOC)}, pages 1107--1116, 2012.
\newblock \href {https://doi.org/10.1145/2213977.2214076}
  {\path{doi:10.1145/2213977.2214076}}.

\bibitem[EG59]{eisenberg1959consensus}
Edmund Eisenberg and David Gale.
\newblock Consensus of subjective probabilities: The pari-mutuel method.
\newblock {\em The Annals of Mathematical Statistics}, 30(1):165--168, 1959.
\newblock URL: \url{http://www.jstor.org/stable/2237130}.

\bibitem[Eis61]{Eisenberg61}
Edmund Eisenberg.
\newblock Aggregation of utility functions.
\newblock {\em Management Sci.}, 7(4):337--350, 1961.
\newblock \href {https://doi.org/10.1287/mnsc.7.4.337}
  {\path{doi:10.1287/mnsc.7.4.337}}.

\bibitem[Fei98]{10.1145/285055.285059}
Uriel Feige.
\newblock A threshold of {$\ln n$} for approximating set cover.
\newblock {\em J. ACM}, 45(4):634–652, 1998.
\newblock \href {https://doi.org/10.1145/285055.285059}
  {\path{doi:10.1145/285055.285059}}.

\bibitem[FGL23]{finster2023substitutes}
Simon Finster, Paul Goldberg, and Edwin Lock.
\newblock Substitutes markets with budget constraints: solving for competitive
  and optimal prices.
\newblock In {\em Conf.\ Web and Internet Economics (WINE)}, 2023.
\newblock \href {https://arxiv.org/abs/2310.03692} {\path{arXiv:2310.03692}}.

\bibitem[FGMS06]{fleischer2006tight}
Lisa Fleischer, Michel~X Goemans, Vahab~S Mirrokni, and Maxim Sviridenko.
\newblock Tight approximation algorithms for maximum general assignment
  problems.
\newblock In {\em Symp.\ Discrete Algorithms (SODA)}, volume~6, pages 611--620,
  2006.

\bibitem[FMMO22]{fallah2022bridging}
Alireza Fallah, Ali Makhdoumi, Azarakhsh Malekian, and Asuman Ozdaglar.
\newblock Bridging central and local differential privacy in data acquisition
  mechanisms.
\newblock {\em Conf.\ Adv.\ Neural Information Processing Systems (NeurIPS)},
  35:21628--21639, 2022.

\bibitem[FMMO24]{fallah2024optimal}
Alireza Fallah, Ali Makhdoumi, Azarakhsh Malekian, and Asuman Ozdaglar.
\newblock Optimal and differentially private data acquisition: Central and
  local mechanisms.
\newblock {\em Operations Research}, 72(3):1105--1123, 2024.
\newblock \href {https://doi.org/10.1287/opre.2022.0014}
  {\path{doi:10.1287/opre.2022.0014}}.

\bibitem[FNS11]{feldman2011unified}
Moran Feldman, Joseph Naor, and Roy Schwartz.
\newblock A unified continuous greedy algorithm for submodular maximization.
\newblock In {\em Symp.\ Foundations of Computer Science (FOCS)}, pages
  570--579. IEEE, 2011.
\newblock \href {https://doi.org/10.1109/FOCS.2011.46}
  {\path{doi:10.1109/FOCS.2011.46}}.

\bibitem[FOT23]{fleckenstein2023review}
Mike Fleckenstein, Ali Obaidi, and Nektaria Tryfona.
\newblock A review of data valuation approaches and building and scoring a data
  valuation model.
\newblock {\em Harvard Data Science Review}, 5(1), 2023.
\newblock \href {https://doi.org/10.1162/99608f92.c18db966}
  {\path{doi:10.1162/99608f92.c18db966}}.

\bibitem[FSVV25]{farboodi2025valuing}
Maryam Farboodi, Dhruv Singal, Laura Veldkamp, and Venky Venkateswaran.
\newblock Valuing financial data.
\newblock {\em The Review of Financial Studies}, 38(3):938--980, 2025.
\newblock \href {https://doi.org/10.1093/rfs/hhae034}
  {\path{doi:10.1093/rfs/hhae034}}.

\bibitem[FV06]{feige2006approximation}
Uriel Feige and Jan Vondr{\'a}k.
\newblock Approximation algorithms for allocation problems: Improving the
  factor of {$1-1/e$}.
\newblock In {\em Symp.\ Foundations of Computer Science (FOCS)}, pages
  667--676. IEEE, 2006.
\newblock \href {https://doi.org/10.1109/FOCS.2006.14}
  {\path{doi:10.1109/FOCS.2006.14}}.

\bibitem[FV23]{farboodi2023data}
Maryam Farboodi and Laura Veldkamp.
\newblock Data and markets.
\newblock {\em Annual Review of Economics}, 15(1):23--40, 2023.
\newblock \href {https://doi.org/10.1146/annurev-economics-082322-023244}
  {\path{doi:10.1146/annurev-economics-082322-023244}}.

\bibitem[Gal60]{Gale60}
D.~Gale.
\newblock {\em Theory of Linear Economic Models}.
\newblock McGraw Hill, N.Y., 1960.

\bibitem[GHL{\etalchar{+}}23]{garg2023approximating}
Jugal Garg, Edin Husi{\'c}, Wenzheng Li, L{\'a}szl{\'o}~A V{\'e}gh, and Jan
  Vondr{\'a}k.
\newblock Approximating {N}ash social welfare by matching and local search.
\newblock In {\em Symp.\ Theory of Computing (STOC)}, pages 1298--1310, 2023.
\newblock \href {https://doi.org/10.1145/3564246.3585255}
  {\path{doi:10.1145/3564246.3585255}}.

\bibitem[GMVY17]{GargMVY17}
Jugal Garg, Ruta Mehta, Vijay~V. Vazirani, and Sadra Yazdanbod.
\newblock Settling the complexity of {L}eontief and {PLC} exchange markets
  under exact and approximate equilibria.
\newblock In {\em Symp.\ Theory of Computing (STOC)}, pages 890--901, 2017.
\newblock \href {https://doi.org/10.1145/3055399.3055474}
  {\path{doi:10.1145/3055399.3055474}}.

\bibitem[GR11]{ghosh2011selling}
Arpita Ghosh and Aaron Roth.
\newblock Selling privacy at auction.
\newblock In {\em ACM Conf.\ Electronic Commerce (EC)}, pages 199--208, 2011.
\newblock \href {https://doi.org/10.1145/1993574.1993605}
  {\path{doi:10.1145/1993574.1993605}}.

\bibitem[GZG22]{GoktasZG22}
Denizalp Goktas, Sadie Zhao, and Amy Greenwald.
\newblock Zero-sum stochastic stackelberg games.
\newblock In {\em Conf.\ Adv.\ Neural Information Processing Systems
  (NeurIPS)}, 2022.

\bibitem[HC24]{Hossain024}
Safwan Hossain and Yiling Chen.
\newblock Equilibrium of data markets with externality.
\newblock In {\em {ICML}}. {PMLR}, 2024.
\newblock URL: \url{https://proceedings.mlr.press/v235/hossain24a.html}.

\bibitem[HK12]{huber2012towards}
Anna Huber and Vladimir Kolmogorov.
\newblock Towards minimizing {$k$}-submodular functions.
\newblock In {\em International symposium on combinatorial optimization}, pages
  451--462. Springer, 2012.
\newblock \href {https://doi.org/10.1007/978-3-642-32147-4_40}
  {\path{doi:10.1007/978-3-642-32147-4_40}}.

\bibitem[HS16]{horner2016selling}
Johannes H{\"o}rner and Andrzej Skrzypacz.
\newblock Selling information.
\newblock {\em Journal of Political Economy}, 124(6):1515--1562, 2016.

\bibitem[Ich21]{ichihashi2021competing}
Shota Ichihashi.
\newblock Competing data intermediaries.
\newblock {\em The RAND Journal of Economics}, 52(3):515--537, 2021.
\newblock \href {https://doi.org/10.1111/1756-2171.12382}
  {\path{doi:10.1111/1756-2171.12382}}.

\bibitem[ITY16]{10.5555/2884435.2884465}
Satoru Iwata, Shin-ichi Tanigawa, and Yuichi Yoshida.
\newblock Improved approximation algorithms for {$k$}-submodular function
  maximization.
\newblock In {\em Symp.\ Discrete Algorithms (SODA)}, page 404–413, 2016.
\newblock \href {https://doi.org/10.1137/1.9781611974331.ch30}
  {\path{doi:10.1137/1.9781611974331.ch30}}.

\bibitem[KCP10]{KorzhykCP10}
Dmytro Korzhyk, Vincent Conitzer, and Ronald Parr.
\newblock Complexity of computing optimal stackelberg strategies in security
  resource allocation games.
\newblock In {\em Conf.\ Artif.\ Intell.\ (AAAI)}, pages 805--810, 2010.
\newblock \href {https://doi.org/10.1609/aaai.v24i1.7638}
  {\path{doi:10.1609/aaai.v24i1.7638}}.

\bibitem[Kle04]{klemperer2004auctions}
Paul Klemperer.
\newblock {\em Auctions: Theory and Practice}.
\newblock Princeton University Press, 2004.

\bibitem[KLMM08]{khot2005inapproximability}
Subhash Khot, Richard~J. Lipton, Evangelos Markakis, and Aranyak Mehta.
\newblock Inapproximability results for combinatorial auctions with submodular
  utility functions.
\newblock {\em Algorithmica}, 52:3--18, 2008.
\newblock \href {https://doi.org/10.1007/s00453-007-9105-7}
  {\path{doi:10.1007/s00453-007-9105-7}}.

\bibitem[MDJM21]{mehta2021sell}
Sameer Mehta, Milind Dawande, Ganesh Janakiraman, and Vijay Mookerjee.
\newblock How to sell a data set? pricing policies for data monetization.
\newblock {\em Information Systems Research}, 32(4):1281--1297, 2021.
\newblock \href {https://doi.org/10.1287/isre.2021.1027}
  {\path{doi:10.1287/isre.2021.1027}}.

\bibitem[MS13]{MehtaS13}
Ruta Mehta and Milind~A. Sohoni.
\newblock Exchange markets: Strategy meets supply-awareness.
\newblock In {\em Conf.\ Web and Internet Economics (WINE)}, volume 8289, pages
  361--362, 2013.
\newblock \href {https://doi.org/10.1007/978-3-642-45046-4_29}
  {\path{doi:10.1007/978-3-642-45046-4_29}}.

\bibitem[MSV08]{mirrokni2008tight}
Vahab Mirrokni, Michael Schapira, and Jan Vondr{\'a}k.
\newblock Tight information-theoretic lower bounds for welfare maximization in
  combinatorial auctions.
\newblock In {\em ACM Conf.\ Electronic Commerce (EC)}, pages 70--77, 2008.
\newblock \href {https://doi.org/10.1145/1386790.1386805}
  {\path{doi:10.1145/1386790.1386805}}.

\bibitem[MYC{\etalchar{+}}23]{MurhekarYCLM23}
Aniket Murhekar, Zhuowen Yuan, Bhaskar~Ray Chaudhury, Bo~Li, and Ruta Mehta.
\newblock Incentives in federated learning: Equilibria, dynamics, and
  mechanisms for welfare maximization.
\newblock In {\em Conf.\ Adv.\ Neural Information Processing Systems
  (NeurIPS)}, 2023.

\bibitem[Mye81]{myerson1981optimal}
Roger~B Myerson.
\newblock Optimal auction design.
\newblock {\em Mathematics of operations research}, 6(1):58--73, 1981.
\newblock \href {https://doi.org/10.1287/moor.6.1.58}
  {\path{doi:10.1287/moor.6.1.58}}.

\bibitem[Nul26]{brightdata2026datamarketplaces}
Jake Nulty.
\newblock Top 15 data marketplaces of 2026: Best platforms ranked, 2026.
\newblock Accessed: 2026-01-28.
\newblock URL:
  \url{https://brightdata.com/blog/web-data/best-data-marketplaces}.

\bibitem[NVX14]{nissim2014redrawing}
Kobbi Nissim, Salil Vadhan, and David Xiao.
\newblock Redrawing the boundaries on purchasing data from privacy-sensitive
  individuals.
\newblock In {\em Symp.\ Innovations in Theoret.\ Computer Science (ITCS)},
  pages 411--422, 2014.

\bibitem[NWF78]{nemhauser1978analysis}
George~L Nemhauser, Laurence~A Wolsey, and Marshall~L Fisher.
\newblock An analysis of approximations for maximizing submodular set
  functions—i.
\newblock {\em Mathematical programming}, 14(1):265--294, 1978.
\newblock \href {https://doi.org/10.1007/BF01588971}
  {\path{doi:10.1007/BF01588971}}.

\bibitem[Orl10]{Orlin10}
James Orlin.
\newblock Improved algorithms for computing {F}isher's market clearing prices.
\newblock In {\em Symp.\ Theory of Computing (STOC)}, pages 291--300, 2010.
\newblock \href {https://doi.org/10.1145/1806689.1806731}
  {\path{doi:10.1145/1806689.1806731}}.

\bibitem[Pei20]{pei2020survey}
Jian Pei.
\newblock A survey on data pricing: from economics to data science.
\newblock {\em IEEE Transactions on knowledge and Data Engineering},
  34(10):4586--4608, 2020.
\newblock \href {https://doi.org/10.1109/TKDE.2020.3045927}
  {\path{doi:10.1109/TKDE.2020.3045927}}.

\bibitem[Rou01]{Roughgarden01}
Tim Roughgarden.
\newblock Stackelberg scheduling strategies.
\newblock In {\em Symp.\ Theory of Computing (STOC)}, pages 104--113. {ACM},
  2001.
\newblock \href {https://doi.org/10.1145/380752.380783}
  {\path{doi:10.1145/380752.380783}}.

\bibitem[Rub18]{Rubinstein18}
Aviad Rubinstein.
\newblock Inapproximability of {N}ash equilibrium.
\newblock {\em {SIAM} J. Comput.}, 47(3):917--959, 2018.
\newblock \href {https://doi.org/10.1137/15M1039274}
  {\path{doi:10.1137/15M1039274}}.

\bibitem[Shm09]{Shmyrev09}
Vadim Shmyrev.
\newblock An algorithm for finding equilibrium in the linear exchange model
  with fixed budgets.
\newblock {\em J. Appl.\ Indust.\ Math.}, 3(4):505--518, 2009.
\newblock \href {https://doi.org/10.1134/S1990478909040097}
  {\path{doi:10.1134/S1990478909040097}}.

\bibitem[SKSC25]{song2025existence}
Jiaxin Song, Pooja Kulkarni, Parnian Shahkar, and Bhaskar~Ray Chaudhury.
\newblock On the existence and complexity of core-stable data exchanges.
\newblock {\em arXiv preprint arXiv:2509.16450}, 2025.

\bibitem[ST20]{schoenebeck2020influence}
Grant Schoenebeck and Biaoshuai Tao.
\newblock Influence maximization on undirected graphs: Toward closing the
  {$(1-1/e)$} gap.
\newblock {\em ACM Transactions on Economics and Computation}, 8:22:1--22:36,
  2020.
\newblock \href {https://doi.org/10.1145/3417748} {\path{doi:10.1145/3417748}}.

\bibitem[{Sug}26]{suger_snowflake_pricing}
{Suger.io Documentation}.
\newblock Snowflake marketplace product pricing plans, 2026.
\newblock Accessed: 2026-01-28.
\newblock URL: \url{https://doc.suger.io/snowflake-marketplace/pricing_plans/}.

\bibitem[Var09]{varian2009economic}
Hal~R Varian.
\newblock Economic aspects of personal privacy.
\newblock In {\em Internet Policy and Economics: Challenges and Perspectives},
  pages 101--109. Springer, 2009.

\bibitem[V{\'e}g16]{Vegh16}
L{\'a}szl{\'o}~A. V{\'e}gh.
\newblock A strongly polynomial algorithm for a class of minimum-cost flow
  problems with separable convex objectives.
\newblock {\em {SIAM} J. Comput.}, 45(5):1729--1761, 2016.
\newblock \href {https://doi.org/10.1137/140978296}
  {\path{doi:10.1137/140978296}}.

\bibitem[Vel23]{veldkamp2023valuing}
Laura Veldkamp.
\newblock Valuing data as an asset.
\newblock {\em Review of Finance}, 27(5):1545--1562, 2023.
\newblock \href {https://doi.org/10.1093/rof/rfac073}
  {\path{doi:10.1093/rof/rfac073}}.

\bibitem[vN28]{neumann1928}
J.~von Neumann.
\newblock Zur theorie der gesellschaftsspiele.
\newblock {\em Mathematische Annalen}, 100(1):295--320, 1928.
\newblock \href {https://doi.org/10.1007/BF01448847}
  {\path{doi:10.1007/BF01448847}}.

\bibitem[Von08]{vondrak2008optimal}
Jan Vondr{\'a}k.
\newblock Optimal approximation for the submodular welfare problem in the value
  oracle model.
\newblock In {\em Symp.\ Theory of Computing (STOC)}, pages 67--74, 2008.
\newblock \href {https://doi.org/10.1145/1374376.1374389}
  {\path{doi:10.1145/1374376.1374389}}.

\bibitem[vS34]{von1934marktform}
Heinrich von Stackelberg.
\newblock {\em Marktform und Gleichgewicht}.
\newblock Springer, Vienna, 1934.
\newblock English translation: \emph{The Theory of the Market Economy}, Oxford
  University Press, 1952.
\newblock \href {https://doi.org/10.2307/2224643} {\path{doi:10.2307/2224643}}.

\bibitem[VSZ10]{vonstengel2010leadership}
B.~Von~Stengel and S.~Zamir.
\newblock Leadership games with convex strategy sets.
\newblock {\em Games and Economic Behavior}, 69(2):446--457, 2010.
\newblock \href {https://doi.org/10.1016/j.geb.2009.11.008}
  {\path{doi:10.1016/j.geb.2009.11.008}}.

\bibitem[WZ16]{ward2016maximizing}
Justin Ward and Stanislav Zivny.
\newblock Maximizing {$k$}-submodular functions and beyond.
\newblock {\em ACM Transactions on Algorithms (TALG)}, 12(4):1--26, 2016.
\newblock \href {https://doi.org/10.1145/2850419} {\path{doi:10.1145/2850419}}.

\end{thebibliography}
\end{document}